\def\qi#1 {\fbox {\footnote {\ }}\ \footnotetext { From Qi: {\color{red}#1}}}
\theoremstyle{remark} 
\newtheorem{theorem}{{{\textit{Theorem}}}}
\newtheorem{lemma}{{{\textit{Lemma}}}}
\newtheorem{corollary}{{{{\textit{Corollary}}}}}
\newtheorem{definition}{{{\textit{Definition}}}}
\newtheorem{remark}{{{\textit{Remark}}}}
\newtheorem{example}{{{\textit{Example}}}}
\newtheorem{construction}{{{\textit{Construction}}}}
\title{Doppler Resilient Complementary Sequences: Tighter Aperiodic Ambiguity Function Bound and Optimal Constructions }
	\author{Zheng Wang, Yang Yang,~\IEEEmembership{Member,~IEEE}, Zhengchun Zhou,~\IEEEmembership{Member,~IEEE}, Avik Ranjan Adhikary,~\IEEEmembership{Member,~IEEE}, and Pingzhi Fan,~\IEEEmembership{Fellow,~IEEE}.
	\thanks{ 
		%
		%
		%
		%
		%
		Z. Wang, Y. Yang, and A. R. Adhikary are with the School of Mathematics, Southwest Jiaotong University, Chengdu, 611756, China. (e-mail: wang\_z@my.swjtu.edu.cn, yang\_data@qq.com, avik.adhikary@ieee.org).
		
		Z. Zhou is with the School of Information Science and Technology, Southwest Jiaotong University, Chengdu, 611756, China. (e-mail: zzc@swjtu.edu.cn).

		P. Fan is with the Institute of Mobile Communications, Southwest
		Jiaotong University, Chengdu 611756, China (e-mail: p.fan@ieee.org).
	}
}
\begin{document}
\maketitle

\begin{abstract}
Doppler-resilient complementary sequence sets (DRCSs) are crucial in modern communication and sensing systems in mobile environments. In this paper, we propose a new lower bound for the aperiodic ambiguity function (AF) of unimodular DRCSs based on weight vectors, which generalizes the existing bound as a special case. The proposed lower bound is tighter than the Shen-Yang-Zhou-Liu-Fan bound.
Finally, we propose a novel class of aperiodic DRCSs with small alphabets based on quasi-Florentine rectangles and Butson-type Hadamard matrices. Interestingly, the proposed DRCSs asymptotically satisfy the proposed bound. 

\end{abstract}

\begin{IEEEkeywords}
Doppler-resilient complementary sequences, ambiguity function, quasi-Florentine rectangles.
\end{IEEEkeywords}

\section{Introduction}
In the 1960s, Golay introduced Golay complementary pairs (GCPs) \cite{golay1961complementary}, whose aperiodic autocorrelations sum up to zero at each non-zero time-shift. In \cite{tseng1972complementary}, Tseng and Liu extended the idea of GCP to complementary sequence sets (CSs) and introduced mutually orthogonal complementary sequence sets (MOCSS), where the sum of aperiodic autocorrelation at a non-zero time-shift is zero, and the sum of aperiodic cross-correlation at any time-shift is also zero. Due to their ideal correlation properties, CSs/MOCSSs are widely used in various applications, including channel estimation \cite{channeles2001complementary}, spreading codes for asynchronous multicarrier code-division multiple-access (MC-CDMA) communications \cite{mccdma2001multicarrier}, reducing peak-to-average power ratio (PAPR) in orthogonal frequency-division multiplexing (OFDM) systems \cite{PAPR2021encoding}, Doppler-resilient waveform design \cite{Doppler2008doppler}, \cite{Doppler2024complementary}, integrated sensing and communication (ISAC) \cite{ISAC2017ieee}, \cite{ISACl2020doppler}, and more.
However, the size of MOCSS is upper-bounded by the number of constituent sequences, which limits their applicability in systems that require a large number of users. To address this limitation, one approach is to allow the sum of aperiodic correlations of complementary sequences to be a small value, rather than zero. This concept is known as a quasi-complementary sequence set (QCSS) \cite{liu2013tighter}. In \cite{welch1974lower}, Welch proposed a bound for QCSSs using the inner product theorem, and in \cite{liu2013tighter}, Liu, Guan, and Mow presented a tighter aperiodic correlation bound based on the Levenshtein bound \cite{levenshtein1999new}.

However, modern communication systems must address the Doppler effect caused by high mobility. Traditional sequence designs that focus solely on the correlation function are inefficient for current application scenarios. For modern applications such as joint communication and radar systems, sequences must not only mitigate the interference caused by the channel delay but also be resilient to Doppler interference; such sequences are referred to as Doppler-resilient sequences (DRSs) \cite{ye22}. This results to the performance metric being shifted from correlation function to ambiguity function (AF). Several researchers have explored the lower bounds on the maximum AF magnitude. In 2013, Ding \textit{et al.} \cite{ding13} derived the lower bound for the maximum sidelobe of the AF based on the Welch bound. It is important to note that in many practical applications (e.g., \cite{duggal2020doppler,kumari}), the Doppler frequency range and delay range can be significantly smaller than the bandwidth of the transmitted signal and the sequence duration. Motivated by this, Ye \textit{et al.} \cite{ye22} introduced the concept of low/zero ambiguity zones (ZAZ/LAZ) and derived the lower bound for ZAZ/LAZ sequence sets. More recently, Meng \textit{et al.} \cite{meng2024new} proposed tighter theoretical bounds for the aperiodic AF under specific parameters. Additionally, several DRS sets have been designed that satisfy the proposed AF lower bound.

Based on CSs, Pezeshki \textit{et al.} \cite{Doppler2008doppler} designed Doppler-resilient Golay complementary waveforms (DRGCWs) by utilizing GCPs and the Prouhet–Thue–Morse (PTM) sequence for pulse radar applications.
The core idea of DRGCW is to repeatedly transmit GCPs over a pulse sequence by leveraging the structure of the PTM sequence. Consequently, the Taylor expansion of its AF near zero Doppler exhibits higher-order zeros, leading to a Doppler-resilient AF. Subsequently, extensive research has been reported in this direction, such as \cite{tang2014construction,zhu2017range,zhu2019alternative,duggal2020doppler,wang2021quasi}.
Notably, DRGCW assumes that the Doppler shift is significantly smaller than the pulse repetition frequency (PRF) or that the pulse duration is much shorter than the pulse repetition interval (PRI). Based on this assumption, the intra-pulse Doppler shift can be neglected, and only the inter-pulse Doppler shift needs to be considered\cite{duggal2020doppler}. However, in certain cases, this assumption does not hold due to large Doppler shifts. In \cite{shen2024}, Shen \textit{et al.} proposed a novel transmission approach, assuming that the PRI is an integer multiple of the PRF. The authors transmitted a different sequence within each pulse and accumulated AFs across multiple sequences, thereby achieving lower AF sidelobes. Such sequences are referred to as Doppler-resilient complementary sequence sets (DRCSs) in \cite{shen2024}. In \cite{shen2024}, the authors derived lower bounds for the periodic AF, aperiodic AF, and odd periodic AF of DRCSs. Further, the authors in \cite{shen2024} proposed various constructions, including optimal constructions for periodic DRCSs.
 However, the problem of designing sequence sets which achieves the aperiodic AF bound for DRCSs, still remains open.

 
Motivated by the works of \cite{liu2013tighter} and \cite{shen2024}, we propose a new lower bound for the aperiodic auto-AF of DRCSs, which generalizes several previously established bounds, like Levenshtein bound \cite{levenshtein1999new}, Liu-Guan-Mow bound \cite{liu2013tighter}, Meng-Guan-Ge-Liu-Fan bound \cite{meng2024new}, Liu-Zhou-Udaya bound \cite{liu2021tighter}, and Peng-Fan bound \cite{peng2004generalised}, as special cases. Additionally, we analyze three distinct types of weight vectors and their corresponding theoretical aperiodic AF bounds for DRCSs. By selecting an appropriate weight vector, we derive a tighter lower bound than the Shen-Yang-Zhou-Liu-Fan bound \cite{shen2024}. Furthermore, using the quasi-Florentine rectangles proposed by Adhikary \textit{et al.} in \cite{Avik2024} and Butson-type Hadamard matrices, we introduce a class of DRCSs. Interestingly, the proposed DRCSs satisfy the proposed bound asymptotically, thus solving the open problem posed in \cite{shen2024}. Furthermore, the proposed construction yields DRCSs with flexible parameters over any fixed alphabet.

The rest of the paper is organized as follows. Section II introduces the necessary notations and lemmas. In Section III, we present a tighter aperiodic lower bound. Section IV analyzes three types of weight vectors. In Section V, we propose a new class of DRCSs with asymptotic achievability. Finally, Section VI concludes the paper.

\section{Preliminaries}
For convenience, we will use the following notations consistently in this paper:
\begin{itemize}
	\item \( \mathbb{Z}_N \) denotes the ring of integers modulo \( N \).
	\item \( \omega_N = e^{\frac{2 \pi \sqrt{-1}}{N}} \) is a primitive \( N \)-th complex root of unity.
	\item \( (\cdot)^* \), \( (\cdot)^T \), and \( (\cdot)^H \) denote the complex conjugate, transpose, and conjugate transpose, respectively.
	\item \( \mathbb{F}_p \) denotes a finite field containing \( p \) elements, and \( \mathbb{F}_{p^n} \) represents its degree-\( n \) extension.
	\item \( \odot \) denotes the Hadamard product.
	\item \( \text{circ}(\mathbf{a},k) \) represents the circular left shift of \( \mathbf{a} \) by \( k \) elements.
	\item \( \mathbf{0}_{1\times Z} \) denotes the all-zero row vector of size \( 1 \times Z \).
\end{itemize}

\subsection{Ambiguity Function}
Let $\mathbf{a}=(a(0),a(1),\cdots,a(N-1))$ and $\mathbf{b}=(b(0),b(1),$ $\cdots,b(N-1))$ be two complex unimodular sequences with period $N,$ i.e., $|a(i)|=1$ and $0\leq i<N$.
The aperiodic cross AF of $\mathbf{a}$ and $\mathbf{b}$ at time shift $\tau$ and Doppler $v$ at is defined as follows:
\begin{align*}
	AF_{\mathbf{a}, \mathbf{b}}(\tau, v)=\left\{\begin{aligned}
		&\sum_{t=0}^{N-1-\tau} a(t) b^*(t+\tau) \omega_N^{vt}, & 0 \leq \tau \leq N-1,\\
		&\sum_{t=-\tau}^{N-1} a(t) b^*(t+\tau) \omega_N^{vt}, & 1-N \leq \tau<0,\\
		&0,&|\tau|\geq N.
	\end{aligned}\right.
\end{align*}
If $\mathbf{a}=\mathbf{b},$ then $AF_{\mathbf{a}, \mathbf{b}}(\tau, \nu)$ is called aperiodic auto AF and written as $AF_{\mathbf{a}}(\tau,v)$.

A DRCS set $\mathcal{C}=\left\{\mathbf{C}^{(0)}, \mathbf{C}^{(1)}, \cdots, \mathbf{C}^{(K-1)}\right\}$ contains $K$ DRCSs, each of which consists of $M\geq 2$ elementary sequences of length $N$, i.e., $\mathbf{C}^{(k)} =\{\mathbf{c}_0^{(k)}, \mathbf{c}_1^{(k)}, \cdots, \mathbf{c}_{M-1}^{(k)}\}$, where $\mathbf{c}_m^{(k)} =(c_{m}^{(k)}(0), c_{m}^{(k)}(1), \cdots, c_{m}^{(k)}(N-1))$, $0 \leq k < K$ and $0\leq m<M$. 
For two DRCSs $\mathbf{C}^{(k_1)}$ and $\mathbf{C}^{(k_2)}$, their aperiodic cross AF is defined as the aperiodic ambiguity function sum, i.e.,
\begin{align*}
AF_{\mathbf{C}^{(k_1)}, \mathbf{C}^{(k_2)}}(\tau,v)=\sum_{m=0}^{M-1}AF_{\mathbf{c}_m^{(k_1)}, \mathbf{c}_m^{(k_2)}}(\tau, v).
\end{align*}
When $k_1 = k_2$, it is abbreviated to $AF_{\mathbf{C}^{(k_1)}}(\tau, v)$.

\subsection{Low Ambiguity Zone and Existing Aperiodic DRCS Set Bound}
For a DRCS set $\mathcal{C}$, its maximum aperiodic AF magnitude over a region $\Pi =(-Z_x, Z_x) \times(-Z_y, Z_y)\subseteq(-N, N) \times(-N, N)$ is defined as $\theta_{\max}(\mathcal{C})=\max \left\{\theta_a(\mathcal{C}), \theta_c(\mathcal{C})\right\}$, where
\begin{align*}
\theta_a(\mathcal{C})=&\max \left\{\left|A F_{\mathbf{C}} (\tau, v)\right|: \mathbf{C}  \in \mathcal{C},(\tau, v) \neq(0,0) \in \Pi\right\},\\
\theta_c(\mathcal{C})=&\max \left\{\left|A F_{\mathbf{C}, \mathbf{D}}(\tau, f)\right|: \mathbf{C}, \mathbf{D} \in \mathcal{C},(\tau, v) \in \Pi\right\}
\end{align*}
are called the maximum aperiodic auto AF magnitude and the maximum aperiodic cross AF magnitude, respectively.
Then, $\mathcal{C}$ is referred to as an aperiodic $(K, M, N, \theta_{\max},\Pi)$-DRCS set.

The only existing aperiodic AF lower bound of DRCSs in the literature can be expressed as:
\begin{lemma}[\cite{shen2024}]\label{lem-shenbound}
	For a $\left(K, M, N, \theta_{\max }, \Pi\right)$-DRCS set, where $\Pi=\left(-Z_x, Z_x\right) \times\left(-Z_y, Z_y\right), 1 \leq Z_x, Z_y \leq N$, the lower bound of the aperiodic AF is given by
	\begin{align*}
		\theta_{\max } \geq \frac{M N}{\sqrt{Z_y}} \sqrt{\frac{\frac{K Z_x Z_y}{M\left(N+Z_x-1\right)}-1}{K Z_x-1}}.
	\end{align*}
\end{lemma}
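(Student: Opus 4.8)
The plan is to reduce the ambiguity-function bound to a Welch-type (frame) inequality for a carefully chosen family of vectors, exploiting the fact that a Doppler shift acts as a modulation. First I would record the elementary identity that, writing $\hat{c}_{m,v}^{(k)}(t) = c_m^{(k)}(t)\,\omega_N^{vt}$ for the $v$-modulated elementary sequences, the complementary aperiodic cross-correlation of two modulated sets at delay $\tau$ satisfies $\sum_{m} \sum_t \hat{c}_{m,v_1}^{(k_1)}(t)\,\hat{c}_{m,v_2}^{(k_2)*}(t+\tau) = \omega_N^{-v_2\tau}\,AF_{\mathbf{C}^{(k_1)},\mathbf{C}^{(k_2)}}(\tau,\, v_1-v_2)$. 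Thus, up to a unimodular phase, every inner product between two modulated sets reproduces an AF value whose Doppler coordinate is the \emph{difference} $v_1-v_2$. Restricting the modulation indices to $v \in \{0,1,\dots,Z_y-1\}$ keeps $v_1-v_2 \in (-Z_y,Z_y)$, i.e. inside the Doppler span of $\Pi$.

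Next I would build the vectors. Zero-pad each $\hat{c}_{m,v}^{(k)}$ to length $L = N+Z_x-1$ (the tilde below denoting this padding), and for each triple $(k,v,p)$ with $k \in \{0,\dots,K-1\}$, $v \in \{0,\dots,Z_y-1\}$, $p \in \{0,\dots,Z_x-1\}$ define $\mathbf{u}_{k,v,p} \in \mathbb{C}^{ML}$ by stacking the $M$ vectors $\text{circ}(\tilde{\hat{c}}_{m,v}^{(k)},p)$. Because the padding length equals $Z_x-1$, for $|p_1-p_2|<Z_x$ the periodic correlation of the padded words coincides with the aperiodic one, so $\langle \mathbf{u}_{k_1,v_1,p_1},\mathbf{u}_{k_2,v_2,p_2}\rangle$ equals the expression above at delay $\tau=p_1-p_2$; in particular its magnitude is exactly $|AF_{\mathbf{C}^{(k_1)},\mathbf{C}^{(k_2)}}(p_1-p_2,\,v_1-v_2)|$. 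Each $\mathbf{u}_{k,v,p}$ has squared norm $MN$, there are $n=KZ_yZ_x$ of them, and they span a space of dimension at most $ML$. Applying $\sum_{i,j}|\langle \mathbf{u}_i,\mathbf{u}_j\rangle|^2 \ge (\sum_i\|\mathbf{u}_i\|^2)^2/(ML)$ lower-bounds the Gram energy by $K^2Z_y^2Z_x^2 MN^2/L$.

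The decisive step — and the main obstacle — is to evaluate the left-hand Gram energy more sharply than by bounding every off-diagonal term by $\theta_{\max}^2$, since that naive split only yields the weaker denominator $KZ_xZ_y-1$. The key observation is that the terms with $k_1=k_2$, $p_1=p_2$, but $v_1\neq v_2$ are \emph{exactly zero}: there the inner product collapses to $M\sum_{t=0}^{N-1}\omega_N^{(v_1-v_2)t}$, a geometric sum over a full period that vanishes because $0<|v_1-v_2|\le Z_y-1<N$. I would therefore partition the $n^2$ ordered pairs into (i) the $n=KZ_yZ_x$ diagonal terms, each equal to $(MN)^2$; (ii) the $KZ_xZ_y(Z_y-1)$ zero-delay, nonzero-Doppler, equal-shift terms, each equal to $0$; and (iii) the remaining $KZ_xZ_y^2(KZ_x-1)$ terms, each of magnitude at most $\theta_{\max}^2$.

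Substituting these into the frame inequality gives $KZ_yZ_x M^2N^2 + KZ_xZ_y^2(KZ_x-1)\,\theta_{\max}^2 \ge K^2Z_y^2Z_x^2MN^2/L$, and solving for $\theta_{\max}^2$ and simplifying reproduces exactly $\theta_{\max}\ge \frac{MN}{\sqrt{Z_y}}\sqrt{(KZ_xZ_y/(M(N+Z_x-1))-1)/(KZ_x-1)}$. The only genuinely delicate points are the exact vanishing of the zero-delay terms (which is what sharpens $KZ_xZ_y-1$ to $Z_y(KZ_x-1)$) and the bookkeeping that the residual count is precisely $KZ_xZ_y^2(KZ_x-1)$; everything else is routine algebra.
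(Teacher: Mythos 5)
Your proof is correct, and every step checks out: the modulation identity with the harmless phase $\omega_N^{-v_2\tau}$, the zero-padding to $L=N+Z_x-1$ making periodic shifts reproduce aperiodic AF values for $|p_1-p_2|<Z_x$, the norms $\|\mathbf{u}_{k,v,p}\|^2=MN$, the Gram-energy bound $\sum_{i,j}|\langle\mathbf{u}_i,\mathbf{u}_j\rangle|^2\geq(\mathrm{tr}\,G)^2/(ML)$, the exact vanishing of the $KZ_xZ_y(Z_y-1)$ equal-shift nonzero-Doppler terms (this is precisely Lemma \ref{lemma}, and it does require unimodularity, which you use), the residual count $KZ_xZ_y^2(KZ_x-1)$, and the closing algebra all hold, with every residual pair genuinely lying in $\Pi$ (auto terms have $\tau\neq 0$, cross terms are unrestricted). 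One caveat on the comparison: the paper never proves Lemma \ref{lem-shenbound} itself --- it is imported from \cite{shen2024} as the benchmark --- so the only internal reference point is the machinery built for Theorems \ref{T2} and \ref{T3}. Against that machinery, your argument is the unweighted sibling: the paper's matrix $\mathbf{U}_{(Z_x,Z_y)}$ is exactly your stacked family of zero-padded, modulated, circularly shifted vectors with weights $\sqrt{p_i}\sqrt{w_r}$ attached, and the identity $\|\mathbf{U}^H\mathbf{U}\|_F^2=\|\mathbf{U}\mathbf{U}^H\|_F^2$ plays the role of your frame inequality. The essential divergence is where the lower bound on the Gram energy comes from: you use the dimension bound $(\mathrm{tr}\,G)^2/(ML)$, which is what injects the factor $N+Z_x-1$ into the final expression, whereas the paper's Lemma \ref{Lobound} lower-bounds $\|\mathbf{U}\mathbf{U}^H\|_F^2$ by $MK^2(N-\sum_{s,t}\tau_{s,t,N}w_sw_t)$, retaining the diagonal-block autocorrelation structure and avoiding any dimension count. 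That substitution, combined with optimized weight vectors, is precisely the slack the paper exploits to obtain bounds strictly tighter than the one you derived; your route buys simplicity and recovers the Shen-Yang-Zhou-Liu-Fan bound exactly, including the sharpened denominator $Z_y(KZ_x-1)$ that the naive all-off-diagonal estimate would miss.
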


In the next section, we further tighten this bound for a class of aperiodic DRCSs. To evaluate the closeness between the theoretical and achieved lower bounds of the aperiodic AF, the optimality factor $\rho$ is defined as follows.

\begin{definition}[Optimality Factor]
	For a $\left(K, M, N, \theta_{\max}, \Pi\right)$-DRCS set, where $\Pi=\left(-Z_x, Z_x\right) \times\left(-Z_y, Z_y\right)$, the aperiodic AF lower bound of DRCSs is denoted as $\theta_{\text{opti}}$, i.e., $\theta_{\max}\geq \theta_{\text{opti}}$. The optimality factor $\rho$ is defined as follows:
	\begin{align*}
		\rho = \frac{\theta_{\max}}{\theta_{\text{opti}}}.
	\end{align*}
\end{definition}
In general, $\rho \geq 1$. We call a DRCS set optimal if $\rho = 1$. A DRCS set is asymptotically optimal if $\rho \rightarrow 1$ as $N \rightarrow +\infty$.

\subsection{Quasi-Florentine Rectangle}
\begin{definition}[\cite{Avik2024}]
A matrix \( \mathbf{A} \) over \( \mathbb{Z}_N \) is called a \textit{quasi-Florentine rectangle} if it satisfies the following two conditions:

\begin{itemize}
	\item \textbf{C1:} Each row of the matrix contains exactly \( N-1 \) distinct symbols, with every symbol appearing exactly once in each row.
	\item \textbf{C2:} For any ordered pair \( (a, b) \) of two distinct symbols and for any integer \( m \) from 1 to \( N-2 \), there is at most one row in which the symbol \( b \) appears exactly \( m \) steps to the right of the symbol \( a \).
\end{itemize}

\end{definition}

The quasi-Florentine rectangle is highly useful for constructing DRCSs, and its construction can be found in \cite{Avik2024}. In the following, we will present a construction of the quasi-Florentine rectangle using finite fields.

\begin{construction}[\cite{Avik2024}]\label{AC1}
Let \( p \) be a prime number, \( n \) be a positive integer, \( f(x) \) be a primitive polynomial of degree \( n \) over \( \mathbb{F}_p \), and let \( \alpha \) be a primitive element of \( \mathbb{F}_{p^n} \). Then the set \( \{1,\alpha, \alpha^2, \dots, \alpha^{n-1}\} \) forms a basis of \( \mathbb{F}_p \) over \( \mathbb{F}_{p^n} \). For any \( \beta \in \mathbb{F}_{p^n} \), there exists a vector \( (a_0, a_1, \dots, a_{n-1}) \in \mathbb{F}_p^n \) such that
\[
\beta = a_0 + a_1 \alpha + \dots + a_{n-1} \alpha^{n-1}.
\]
Define a one-to-one mapping \( \psi \) from \( \mathbb{F}_{p^n} \) to \( \mathbb{Z}_{p^n} \), where \( \psi(0) = 0 \) and
\[
\psi(\beta) = a_0 + a_1 p + \dots + a_{n-1} p^{n-1}.
\]
\end{construction}

\begin{theorem}[\cite{Avik2024}]\label{T1}
	Let $\mathbf{A}$ be a matrix of order \( p^n \times (p^n - 1) \) defined as follows:
\[
\mathbf{A} = \left[\begin{array}{cccc}
	a_{0,0} & a_{0,1} & \cdots & a_{0, p^n-2} \\
	a_{1,0} & a_{1,1} & \cdots & a_{1, p^n-2} \\
	\vdots & \vdots & \ddots & \vdots \\
	a_{p^n-1,0} & a_{p^n-1,1} & \cdots & a_{p^n-1, p^n-2}
\end{array}\right]_{p^n \times (p^n-1)}
\]
 where
\[
a_{i,j} =
\begin{cases}
	\psi(\alpha^j), & i = 0, \\
	\psi(\alpha^j + \alpha^{i-1}), & 0 < i < p^n,
\end{cases}
\]
and $\psi$ is a one-to-one mapping as defined in Construction \ref{AC1}.
Then the matrix $\mathbf{A}$ is a quasi-Florentine rectangle of size \( p^n \times (p^n - 1) \) over $\mathbb{Z}_{p^n}$.
\end{theorem}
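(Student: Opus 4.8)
The plan is to verify directly that $\mathbf{A}$ satisfies the two defining conditions C1 and C2 of a quasi-Florentine rectangle over $\mathbb{Z}_{p^n}$ (so here $N = p^n$). The key preliminary observation, which unifies the two cases in the definition of $a_{i,j}$, is that the rows of $\mathbf{A}$ can be reindexed bijectively by the field elements $c \in \mathbb{F}_{p^n}$: row $0$ corresponds to $c = 0$, since $\psi(\alpha^j) = \psi(\alpha^j + 0)$, and row $i$ with $0 < i < p^n$ corresponds to $c = \alpha^{i-1} \in \mathbb{F}_{p^n}^{*}$. Because $\{0\} \cup \{\alpha^{i-1} : 1 \le i \le p^n - 1\} = \mathbb{F}_{p^n}$ and there are exactly $p^n$ rows, this reindexing is a bijection, and the entry in row $c$ and column $j$ is uniformly $\psi(\alpha^j + c)$ for $0 \le j \le p^n - 2$.

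For C1, I would fix a row $c$ and note that as $j$ runs over $0, \dots, p^n - 2$ the powers $\alpha^j$ run over all of $\mathbb{F}_{p^n}^{*}$, so the field elements $\alpha^j + c$ run over $\mathbb{F}_{p^n} \setminus \{c\}$; these $p^n - 1$ values are pairwise distinct, and since $\psi$ is a bijection their images $\psi(\alpha^j + c)$ are $p^n - 1$ distinct symbols of $\mathbb{Z}_{p^n}$ (precisely $\mathbb{Z}_{p^n} \setminus \{\psi(c)\}$). Hence every symbol occurs at most once in each row, which is exactly C1.

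For C2, which is the heart of the argument, I would translate the combinatorial statement into the field. Fix distinct symbols $a, b$, set $A = \psi^{-1}(a)$ and $B = \psi^{-1}(b)$ so that $A \ne B$, and fix $m$ with $1 \le m \le p^n - 2$. Saying that $b$ appears exactly $m$ steps to the right of $a$ in row $c$ means there is a position $j$ with $0 \le j$ and $j + m \le p^n - 2$ such that $\alpha^j + c = A$ and $\alpha^{j+m} + c = B$. Subtracting $c$ and using the field identity $\alpha^{j+m} = \alpha^m \alpha^j$ gives $B - c = \alpha^m (A - c)$, equivalently $c(\alpha^m - 1) = \alpha^m A - B$. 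Since $\alpha$ has multiplicative order $p^n - 1$ and $1 \le m \le p^n - 2$, we have $\alpha^m \ne 1$, so $\alpha^m - 1$ is invertible and $c$ is forced to equal the single value $c_0 = (\alpha^m A - B)/(\alpha^m - 1)$, which depends only on $A, B, m$. As the rows are indexed bijectively by $c$, at most one row, namely the one with $c = c_0$, can realize this configuration, which is exactly C2.

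I expect the only real subtlety, rather than a genuine obstacle, to be the bookkeeping around boundary effects: one must check that the necessary condition $c = c_0$ is derived without any modular wraparound in the exponents, which is guaranteed because the positions satisfy $0 \le j \le j + m \le p^n - 2 < p^n - 1 = \mathrm{ord}(\alpha)$, so $\alpha^j$ and $\alpha^{j+m}$ are genuine unreduced powers and the identity $\alpha^{j+m} = \alpha^m \alpha^j$ applies verbatim. It is worth emphasizing that C2 asks only for an upper bound of one row, so I need not verify that $c_0$ actually produces a valid occurrence; the necessary condition alone suffices. A minor consistency check I would include is that $c_0 \ne A$ (otherwise $a$ would not appear in row $c_0$ at all): indeed $c_0 = A$ would force $B = A$, contradicting $a \ne b$.
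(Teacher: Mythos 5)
Your proof is correct and complete. One point of comparison to note: the paper does not actually prove this statement at all---it is quoted verbatim as Theorem~\ref{T1} from \cite{Avik2024}, so there is no in-paper argument to measure yours against. Your argument stands on its own: the reindexing of rows by $c \in \mathbb{F}_{p^n}$ (with row $0 \leftrightarrow c=0$ and row $i \leftrightarrow c = \alpha^{i-1}$) correctly unifies the two cases of $a_{i,j}$ into $\psi(\alpha^j + c)$; C1 follows since $j \mapsto \alpha^j + c$ is injective on $\{0,\dots,p^n-2\}$ with image $\mathbb{F}_{p^n}\setminus\{c\}$ and $\psi$ is a bijection; and C2 follows because the occurrence equations $\alpha^j + c = A$, $\alpha^{j+m} + c = B$ force $c(\alpha^m - 1) = \alpha^m A - B$, which pins down $c$ uniquely since $\alpha^m \neq 1$ for $1 \le m \le p^n - 2$. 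You are also right that only the necessary condition on $c$ is needed, since C2 is an ``at most one row'' statement; the remark about unreduced exponents is harmless but not actually load-bearing, as $\alpha^{j+m} = \alpha^m\alpha^j$ holds for all integer exponents regardless of wraparound.
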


\begin{corollary}[\cite{Avik2024}]\label{cor1}
Let $\mathbf{A}$ be the matrix defined in Theorem \ref{T1}. By adding an extra leftmost column where all elements are $p^n$, the modified matrix $\mathbf{A}$ of order $p^n \times p^n$ becomes a quasi-Florentine rectangle over $\mathbb{Z}_{p^n+1}$.
\end{corollary}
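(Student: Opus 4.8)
The plan is to verify the two defining conditions \textbf{C1} and \textbf{C2} directly for the augmented matrix, now viewed over $\mathbb{Z}_{p^n+1}$, so that the ambient alphabet size is $N' = p^n+1$, the required number of distinct symbols per row is $N'-1 = p^n$, and the relevant shift range is $1 \le m \le p^n-1$. Throughout I would lean on two structural facts from Construction~\ref{AC1} and Theorem~\ref{T1}: that $\psi$ is a bijection, and that $\alpha$ is primitive, so $\{\alpha^0,\alpha^1,\dots,\alpha^{p^n-2}\}$ is precisely $\mathbb{F}_{p^n}^{*}$. I would also record at the outset the two observations that drive the case analysis: the new symbol $p^n$ occurs only in the leftmost column (position $0$), and appending a single column on the left shifts every original column by $+1$, hence preserves all pairwise horizontal distances among the original entries.

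First I would dispose of \textbf{C1}. In row $0$ the entries are $\psi(\alpha^j)$, $0\le j\le p^n-2$; as $\alpha^j$ runs over $\mathbb{F}_{p^n}^{*}$ and $\psi$ is injective, row $0$ contains every symbol of $\mathbb{Z}_{p^n}$ except $\psi(0)=0$. In row $i>0$ the entries are $\psi(\alpha^j+\alpha^{i-1})$, and as $\alpha^j$ runs over $\mathbb{F}_{p^n}^{*}$ the argument runs over $\mathbb{F}_{p^n}\setminus\{\alpha^{i-1}\}$, so row $i$ misses exactly $\psi(\alpha^{i-1})$. Thus each original row already consists of $p^n-1$ distinct symbols of $\mathbb{Z}_{p^n}$, and prepending the constant column of value $p^n$ (a symbol in $\mathbb{Z}_{p^n+1}$ but not in $\mathbb{Z}_{p^n}$) yields rows of $p^n$ distinct symbols, each occurring once, which is exactly \textbf{C1} for $N'=p^n+1$.

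The substance lies in \textbf{C2}, which I would settle by casework on the roles of $p^n$. \emph{Case A} ($a,b\in\mathbb{Z}_{p^n}$): since $p^n$ sits only at position $0$, any witnessing pair lies entirely in the original block, so witnesses of ``$b$ is $m$ steps right of $a$'' correspond bijectively to such witnesses in $\mathbf{A}$ at the same $m$; for $1\le m\le p^n-2$ the inherited \textbf{C2} from Theorem~\ref{T1} gives at most one row, while $m=p^n-1$ would force $a$ at position $0$, where the entry is $p^n\ne a$, so there is no witness. \emph{Case B} ($a=p^n$): a witness requires $b$ in position $m$, i.e.\ in original column $m-1$, so the crux is to show each column of $\mathbf{A}$ is a permutation of $\mathbb{Z}_{p^n}$; writing $\gamma=\psi^{-1}(b)$, row $0$ places $b$ in column $j$ iff $\alpha^j=\gamma$, and row $i>0$ iff $\alpha^{i-1}=\gamma-\alpha^j$, and since $\alpha^{i-1}$ exhausts $\mathbb{F}_{p^n}^{*}$ exactly one alternative holds, giving precisely one such row. \emph{Case C} ($b=p^n$): a witness would need $p^n$ at position $s+m>0$, which is impossible, so again there is no witness. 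The three cases together establish \textbf{C2}.

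I expect the only real obstacle to be \emph{Case B}, namely proving the column-Latin property of $\mathbf{A}$: this is where the finite-field structure (the bijectivity of $\psi$ together with $\alpha^{i-1}$ ranging over all of $\mathbb{F}_{p^n}^{*}$) must be invoked to turn ``at most one row'' into ``exactly one row.'' The remaining cases are comparatively routine, reducing either to the \textbf{C2} already guaranteed by Theorem~\ref{T1} or to the trivial fact that the new symbol appears solely in the appended column.
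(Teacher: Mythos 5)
Your proof is correct. There is, however, nothing in the paper to compare it against: Corollary \ref{cor1} is imported from \cite{Avik2024} and stated without proof, so your argument serves as a self-contained verification rather than an alternative to an internal one. The verification itself is sound and complete: for \textbf{C1}, row $0$ omits only $\psi(0)=0$ and row $i>0$ omits only $\psi(\alpha^{i-1})$, so prepending the symbol $p^n$ completes each row to $p^n$ distinct symbols of $\mathbb{Z}_{p^n+1}$; for \textbf{C2}, your three cases are exhaustive. The crux is indeed Case B, and your column-Latin claim holds because column $j$ of $\mathbf{A}$ lists $\psi(\alpha^j+x)$ as $x$ ranges over $\{0\}\cup\{\alpha^{i-1}: 1\le i\le p^n-1\}=\mathbb{F}_{p^n}$, hence contains every symbol of $\mathbb{Z}_{p^n}$ exactly once, giving exactly one witness row when $a=p^n$. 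You also correctly noticed the one subtlety a careless argument would miss: the enlarged alphabet extends the shift range of \textbf{C2} to $m=p^n-1$, which is not covered by the inherited property of Theorem \ref{T1}, but is vacuous in Case A since a gap of $p^n-1$ forces $a$ into position $0$, which is occupied by the new symbol.
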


 Below we give the important lemma.
\begin{lemma}[\cite{Avik2024}]\label{QFRlemma}
	Let $\mathbf{A}$ be a $K \times(N-1)$ quasi-Florentine rectangle over $\mathbb{Z}_N$, denoted as follows:
	\begin{align*}
		\mathbf{A}=\left[\begin{array}{cccc}
			a_{0,0} & a_{0,1} & \cdots & a_{0, p^n-2} \\
			a_{1,0} & a_{1,1} & \cdots & a_{1, p^n-2} \\
			\vdots & \vdots & \ddots & \vdots \\
			a_{K-1,0} & a_{K-1,1} & \cdots & a_{K-1, p^n-2}
		\end{array}\right],
	\end{align*}
	where $a_{i, j}$ denotes the $j$-th element in the $i$-th row. Let $\pi_k: \mathbb{Z}_N \rightarrow \mathbb{Z}_N$ be a permutation, where $\pi_k$ is equivalent to the $k$-th row of $\mathbf{A}$. For $0 \leq k_1 \neq k_2 \leq K-1$, we have $\pi_{k_1,j}=\pi_{k_2,j+\tau}$ has at most one solution for $0 \leq \tau<N-1$, where $0 \leq j+\tau<N-1$.
\end{lemma}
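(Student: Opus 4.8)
The plan is to argue by contradiction, reducing the cross-row shift equation to the within-row ``steps to the right'' configuration that condition \textbf{C2} controls. Throughout I write $a_{k,j}$ for the entry $\pi_{k,j}=\pi_k(j)$ in row $k$ and column $j$, so that for a fixed pair of distinct rows $k_1\neq k_2$ and a fixed shift $\tau$ with $0\le\tau<N-1$ the asserted equation reads $a_{k_1,j}=a_{k_2,j+\tau}$, and the goal is to show at most one column index $j$ (subject to $0\le j$ and $0\le j+\tau<N-1$) satisfies it.

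First I would suppose, toward a contradiction, that for this fixed $\tau$ the equation admits two distinct solutions, which I may relabel as $j_1<j_2$. Setting $a:=a_{k_1,j_1}=a_{k_2,j_1+\tau}$ and $b:=a_{k_1,j_2}=a_{k_2,j_2+\tau}$, I would first record that $a\neq b$: by \textbf{C1} every symbol occurs at most once in a given row, so the two distinct column indices $j_1\neq j_2$ of row $k_1$ must carry distinct symbols.

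The crux is to observe that the common gap $m:=j_2-j_1$ produces the same ordered configuration in both rows. In row $k_1$ the symbol $b$ occupies column $j_1+m$, i.e.\ exactly $m$ steps to the right of $a$; in row $k_2$ the symbol $b$ occupies column $(j_1+\tau)+m$, again exactly $m$ steps to the right of $a$, because the shift $\tau$ cancels in the column difference $(j_2+\tau)-(j_1+\tau)=m$. The index bounds $0\le j_1<j_2\le N-2$ force $1\le m\le N-2$, so $m$ lies in the admissible range of \textbf{C2}. Then the ordered pair $(a,b)$ of distinct symbols realizes the gap $m$ in the two \emph{distinct} rows $k_1$ and $k_2$, directly contradicting \textbf{C2}, which permits at most one such row. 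Hence at most one solution $j$ exists for each fixed $\tau$, which is the claim.

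I expect the only delicate point to be bookkeeping rather than genuine difficulty: one must verify that the induced step $m=j_2-j_1$ truly falls in $\{1,\dots,N-2\}$ and that $a\neq b$, since \textbf{C2} is stated only for distinct symbols and for gaps in that range. Once the shift is seen to cancel in the column difference, the reduction to \textbf{C2} is immediate, and the argument is uniform across all admissible $\tau$ (including $\tau=0$), completing the proof.
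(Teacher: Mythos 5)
Your proof is correct. The paper itself gives no proof of this lemma---it is quoted from \cite{Avik2024} with the proof deferred to that reference---so there is nothing to diverge from; your argument (two solutions $j_1<j_2$ for a fixed shift $\tau$ yield the same ordered pair $(a,b)$ at the same gap $m=j_2-j_1\in\{1,\dots,N-2\}$ in two distinct rows, contradicting condition \textbf{C2}, with \textbf{C1} guaranteeing $a\neq b$) is exactly the natural reduction one expects, and the bookkeeping you flag (the range of $m$ and the cancellation of $\tau$ in the column difference) is handled correctly.
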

More details on quasi-Florentine rectangles can be found in \cite{Avik2024}.

\subsection{Butson-type Hadamard Matrix}
\begin{definition}[\cite{had1973complex}]
  Let $N$ and $r$ be two positive integers, and $\mathbf{B}=[\omega_r^{b_{i,j}}]_{i,j=0}^{N-1}$ be a matrix of order $N$, where $b_{i,j} \in \mathbb{Z} (0\leq i,j<N-1)$. If $\mathbf{B}\mathbf{B}^H=N\mathbf{I}$, then $\mathbf{B}$ is called Butson-type Hadamard matrix, denoted by $BH(N,r)$, where $\mathbf{I}$ is the identity matrix of order $N$.
\end{definition}

\begin{remark}
The discrete Fourier transform (DFT) matrices, the Walsh-Hadamard matrices, and the Hadamard matrices are apecial case of Butson-type Hadamard matrices with parameters $BH(N,N)$, $BH(2^m,2)$, and $BH(4N,2),$ respectively. 
\end{remark}

This paper utilizes Butson-type Hadamard matrices to construct DRCSs with smaller alphabets. In \cite{hadc2006complex}, the authors presented Butson-type Hadamard matrices for various parameters, and Table \ref{BH} lists some known parameters of seed Butson-type Hadamard matrices with \( N \leq 21 \) and \( r \leq 10 \). Additionally, Wallis \cite{had1973complex} proposed other Butson-type Hadamard matrices using the Kronecker product.

\begin{table}[htp]
\begin{center}
	\caption{Parameters of seed Butson-type Hadamarad matrices over alphabet size $\leq 10$.}   
	\label{BH}
\begin{tabular}{|c|c|}
	\hline \begin{tabular}{c} 
		Alphabet \\
		size
	\end{tabular} & Parameters \\
	\hline $\mathbb{Z}_2$ & $B H(2,2)$ \\
	\hline $\mathbb{Z}_3$ & $B H(3,3), B H(6,3), B H(12,3),B H(21,3)$ \\
	\hline $\mathbb{Z}_4$ & $B H(4,4), B H(6,4), B H(10,4),B H(12,4), B H(14,4)$ \\
	\hline $\mathbb{Z}_5$ & $B H(5,5), B H(10,5)$ \\
	\hline $\mathbb{Z}_6$ & $B H(6,6), B H(7,6), B H(9,6),BH(10,6),BH(13,6),BH(14,6)$ \\
	\hline $\mathbb{Z}_7$ &$B H(7,7), B H(14,7)$ \\
	\hline $\mathbb{Z}_8$ & $B H(8,8), B H(16,8)$ \\
	\hline $\mathbb{Z}_9$ & $BH(9,9)$ \\
	\hline $\mathbb{Z}_{10}$ & $B H(9,10), B H(10,10), B H(14,10)$ \\
	\hline
\end{tabular}
\end{center}
\end{table}


\section{Proposed Lower Bounds of DRCSs}
In this section, we derive the aperiodic AF lower bound of unimodular DRCSs $\mathcal{C}$. Before deriving the main theorems, let us first give a lemma that show a property of AF in \cite{ye22}.

\begin{lemma}[\cite{ye22}]\label{lemma}
For any unimodular sequence $\mathbf{a}$, its aperiodic auto-AF satisfies
\begin{align}\label{lemma1}
	AF_{\mathbf{a}}(0, v)=0, \quad 0< |v|<N.
\end{align}
\end{lemma}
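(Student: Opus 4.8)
The plan is to prove this by direct computation, since the identity is really just the orthogonality of characters in disguise. First I would substitute $\tau = 0$ into the definition of the aperiodic auto-AF. With $\tau = 0$ we are in the branch $0 \leq \tau \leq N-1$, and setting $\mathbf{b} = \mathbf{a}$ gives
\begin{align*}
AF_{\mathbf{a}}(0, v) = \sum_{t=0}^{N-1} a(t)\, a^*(t)\, \omega_N^{vt}.
\end{align*}

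Next I would invoke the unimodularity hypothesis. Because $|a(t)| = 1$ for every $t$, the product collapses as $a(t) a^*(t) = |a(t)|^2 = 1$, so the expression becomes the pure exponential sum
\begin{align*}
AF_{\mathbf{a}}(0, v) = \sum_{t=0}^{N-1} \omega_N^{vt},
\end{align*}
which no longer depends on the particular sequence $\mathbf{a}$. This is the key simplification: unimodularity is exactly what strips away all sequence-dependent structure and leaves a geometric series of $N$-th roots of unity.

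Finally I would evaluate this geometric series. Since $\omega_N = e^{2\pi\sqrt{-1}/N}$ is a primitive $N$-th root of unity and the hypothesis $0 < |v| < N$ guarantees that $v$ is not a multiple of $N$, the ratio satisfies $\omega_N^v \neq 1$. Summing the series then yields
\begin{align*}
\sum_{t=0}^{N-1} \omega_N^{vt} = \frac{\omega_N^{vN} - 1}{\omega_N^v - 1} = 0,
\end{align*}
where the numerator vanishes because $\omega_N^{vN} = (\omega_N^N)^v = 1$.

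Frankly, there is no substantive obstacle here; the entire argument is a two-line calculation. The only point requiring genuine care is the nonvanishing of the denominator $\omega_N^v - 1$, and this is precisely where the range restriction $0 < |v| < N$ is indispensable: were $v$ instead a nonzero multiple of $N$, the exponential would be identically $1$ and the sum would equal $N$ rather than zero. Thus the hypothesis on the Doppler range is not cosmetic but essential to the conclusion.
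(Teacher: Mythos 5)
Your proof is correct: with $\tau=0$ the unimodularity collapses $a(t)a^*(t)$ to $1$, and the remaining geometric series of $N$-th roots of unity vanishes precisely because the integer Doppler $v$ satisfies $0<|v|<N$, so $\omega_N^v\neq 1$ while $\omega_N^{vN}=1$. The paper itself gives no proof of this lemma (it is imported from the cited reference), and your two-line character-sum computation is exactly the standard argument one would find there, so there is nothing to add.
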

Next, we derive the aperiodic AF lower bound of  unimodular DRCSs $\mathcal{C}$ in region $\Pi=(-N,N)\times (-Z_y,Z_y)$, where $1\leq Z_y\leq N$. First, based on Frobenius norm of an auto- and cross-ambiguity matrix, we define satisfy the following conditions two weight vectors $\mathbf{w}=\left[w_0, w_1, \cdots ,w_{2N-2}\right]^T$ and $\mathbf{p}=\left[p_0, p_1, \cdots, p_{Z_y-1}\right]^T$, where
\begin{align}
	& \sum_{r=0}^{2N-2}w_r=1,~ w_r\geq 0,  \label{weighttime}\\
	& \sum_{i=0}^{Z_y-1} p_i=1, ~ p_i \geq 0.\label{weightdoppler}
\end{align} 

Let $\mathcal{C}$ be a $(K, M, N, \theta_{\max},\Pi)$-DRCS set, where $\Pi=(-Z_x,Z_y)\times(-Z_y,Z_y)$, $1\leq Z_y<N$, and $Z_x=N$.  Consider the matrix 
\begin{align}\label{matrix}
	\mathbf{U}_{\left(Z_x, Z_y\right)}=\left[\mathbf{U}_0^{\left(Z_x, Z_y\right)}, \mathbf{U}_1^{\left(Z_x, Z_y\right)}, \ldots, \mathbf{U}_{K-1}^{\left(Z_x, Z_y\right)}\right]^T
\end{align}
of size $K(2N-1)Z_y \times M(2N-1)$, where
\begin{align*}
&\mathbf{U}_k^{\left(Z_x, Z_y\right)}=\left[\mathbf{U}_k^{\left(Z_x\right)}(0), \mathbf{U}_k^{\left(Z_x\right)}(1), \ldots, \mathbf{U}_k^{\left(Z_x\right)}\left(Z_y-1\right)\right]^T, 0\leq k\leq K-1,\\
&\mathbf{U}_k^{\left(Z_x\right)}(v_i)=\left[\mathbf{u}_{k,1}^{\left(Z_x\right)}(v_i), \mathbf{u}_{k,2}^{\left(Z_x\right)}(v_i), \ldots, \mathbf{u}_{k,M}^{\left(Z_x\right)}\left(v_i\right)\right]^T, 0\leq v_i< Z_y,
\end{align*}
and 
\begin{align*}
	\mathbf{u}_{k,m}^{\left(Z_x\right)}(v_i)=\left[\begin{array}{c}
		\sqrt{p_i} \sqrt{w_0} \text{circ}\left(\mathbf{u}_{k,m}\odot \mathbf{F}_{v_i},0\right) \\
		\sqrt{p_i} \sqrt{w_1} \text{circ}\left(\mathbf{u}_{k,m}\odot \mathbf{F}_{v_i},1\right) \\
		\vdots \\
		\sqrt{p_i} \sqrt{w_{2N-2}} \text{circ}\left(\mathbf{u}_{k,m}\odot \mathbf{F}_{v_i},2N-2\right)
	\end{array}\right],
\end{align*}
where $0\leq m\leq M-1$, $\mathbf{u}_{k,m}=[\mathbf{c}_m^k,\mathbf{0}_{1\times N-1}]$, $\mathbf{F}_{v_i}=[1,\omega_N^{v_i},\omega_N^{2v_i},\cdots,\omega_N^{(N-1)v_i},\mathbf{0}_{1\times N-1}]$. Based on the properties of matrix operations, the following identity holds:
\begin{align} \label{uhu}
	\left\|\mathbf{U}_{\left(Z_x, Z_y\right)}^H \mathbf{U}_{\left(Z_x, Z_y\right)}\right\|_F^2=\left\|\mathbf{U}_{\left(Z_x, Z_y\right)} \mathbf{U}_{\left(Z_x, Z_y\right)}^H\right\|_F^2,
\end{align}
where $\|\cdot\|_F$ is the Frobenius norm.

\begin{lemma}\label{Lobound}
Considering the right side of  (\ref{uhu}), the following inequality holds:
	\begin{align*}
		\left\|\mathbf{U}_{\left(Z_x, Z_y\right)} \mathbf{U}_{\left(Z_x, Z_y\right)}^H\right\|_F^2\geq MK^2 (N-\sum_{s, t=0}^{2N-2}\tau_{s,t,N}w_s w_t), 
	\end{align*}
	where $\tau_{s,t,N}=\min\{|t-s|,2N-1-|t-s|\}$.
\end{lemma}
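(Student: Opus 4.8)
The plan is to exploit the identity (\ref{uhu}) to move the computation onto the smaller Gram matrix. By (\ref{uhu}) we have $\|\mathbf{U}_{(Z_x,Z_y)}\mathbf{U}_{(Z_x,Z_y)}^H\|_F^2=\|\mathbf{U}_{(Z_x,Z_y)}^H\mathbf{U}_{(Z_x,Z_y)}\|_F^2$, so it suffices to lower-bound the Frobenius norm of $\mathbf{U}^H\mathbf{U}$, a positive-semidefinite matrix of size $M(2N-1)\times M(2N-1)$ whose rows and columns are indexed by pairs $(m,\ell)$ with $0\le m\le M-1$ and $0\le \ell\le 2N-2$. Since $\|\mathbf{U}^H\mathbf{U}\|_F^2$ is a sum of the non-negative quantities $|(\mathbf{U}^H\mathbf{U})_{(m,\ell),(m',\ell')}|^2$, discarding every off-diagonal entry and keeping only the diagonal gives a valid lower bound, i.e. $\|\mathbf{U}\mathbf{U}^H\|_F^2\ge \sum_{m,\ell}\bigl|(\mathbf{U}^H\mathbf{U})_{(m,\ell),(m,\ell)}\bigr|^2$. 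The reason to work on this side rather than the literal right side is that the diagonal of the large matrix $\mathbf{U}\mathbf{U}^H$ produces the wrong functional form, whereas the diagonal of $\mathbf{U}^H\mathbf{U}$ is exactly what yields the stated bound.

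Next I would evaluate the diagonal entries explicitly. Writing $L=2N-1$ and $\mathbf{g}_{k,m,v}=\mathbf{u}_{k,m}\odot\mathbf{F}_{v}$, and recalling $(\text{circ}(\mathbf{a},r))_\ell=a((\ell+r)\bmod L)$, each diagonal entry equals $\sum_{k,v,r}p_{v}w_{r}\,|g_{k,m,v}((\ell+r)\bmod L)|^2$. The crucial observation is that $|g_{k,m,v}(n)|^2=\mathbf{1}[\,n<N\,]$: on the support $\{0,\dots,N-1\}$ the sequence entries $c_m^{(k)}(n)$ and the Doppler factors $\omega_N^{vn}$ are all unimodular, while $\mathbf{g}_{k,m,v}$ vanishes on $\{N,\dots,2N-2\}$, so the modulus is independent of $k,m,v$. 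Summing the index $k$ contributes a factor $K$, summing $v$ contributes $\sum_v p_v=1$ by (\ref{weightdoppler}), and the diagonal entry collapses to $K a_\ell$ with $a_\ell:=\sum_{r=0}^{2N-2}w_r\,\mathbf{1}[(\ell+r)\bmod L<N]$, which does not depend on $m$. As there are $M$ choices of $m$, this gives $\|\mathbf{U}\mathbf{U}^H\|_F^2\ge MK^2\sum_{\ell=0}^{2N-2}a_\ell^2$.

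The final, and most delicate, step is a combinatorial identity for $\sum_\ell a_\ell^2$. Expanding the square gives $\sum_\ell a_\ell^2=\sum_{s,t=0}^{2N-2}w_s w_t\,O(s,t)$, where $O(s,t)=\#\{\ell\in\mathbb{Z}_L:(\ell+s)\bmod L<N \text{ and } (\ell+t)\bmod L<N\}$ counts the positions covered simultaneously by two length-$N$ windows circularly shifted by $s$ and $t$ on $\mathbb{Z}_{2N-1}$. I would argue geometrically that these are two arcs of length $N$ on a cycle of length $2N-1$ whose starting points lie at cyclic distance $\tau_{s,t,N}=\min\{|t-s|,2N-1-|t-s|\}\le N-1$; because $2N>2N-1$ the two arcs always overlap, and a direct count shows their intersection has size exactly $N-\tau_{s,t,N}$, with the zero-padding to length $2N-1$ being precisely what prevents any spurious wrap-around overlap. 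Substituting $O(s,t)=N-\tau_{s,t,N}$ and using $\sum_r w_r=1$ from (\ref{weighttime}) gives $\sum_\ell a_\ell^2=N\sum_{s,t}w_sw_t-\sum_{s,t}\tau_{s,t,N}w_sw_t=N-\sum_{s,t}\tau_{s,t,N}w_sw_t$, and combining this with the previous paragraph yields the claim. The main obstacle is establishing the overlap count $O(s,t)=N-\tau_{s,t,N}$ rigorously, namely handling the modular wrap-around and confirming that window length $N$ inside the ambient length $2N-1$ forces the intersection to depend only on the cyclic distance; the remaining manipulations are routine bookkeeping.
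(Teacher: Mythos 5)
Your proposal is correct and follows essentially the same route as the paper: although the paper nominally expands $\|\mathbf{U}_{(Z_x,Z_y)}\mathbf{U}_{(Z_x,Z_y)}^H\|_F^2$, the entries it actually writes are those of the Gram matrix $\mathbf{U}_{(Z_x,Z_y)}^H\mathbf{U}_{(Z_x,Z_y)}$, and it likewise discards all off-diagonal terms, uses unimodularity together with $\sum_i p_i=1$ and the $k$-independence of $|u_{k,m}(n)|^2$ to reduce each diagonal entry to $K\sum_r w_r\,\mathbf{1}[(\ell+r)\bmod(2N-1)<N]$, and finishes with the same circular window-overlap identity giving $N-\sum_{s,t}\tau_{s,t,N}w_sw_t$. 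If anything, your version is more careful than the paper's, which writes the restriction to same-$m$ blocks as an equality where it should be an inequality and asserts the final overlap count without justification, a point you handle explicitly.
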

\begin{proof}
By definition, we have
\begin{align*}
\begin{split}
\left\|\mathbf{U}_{\left(Z_x, Z_y\right)} \mathbf{U}_{\left(Z_x, Z_y\right)}^H\right\|_F^2 
&=\sum_{m=0}^{M-1}\sum_{s,t=0}^{2N-2}\left|\sum_{k=0}^{K-1}\sum_{i=0}^{Z_y-1}\sum_{r=0}^{2N-2} w_rp_{i}u_{k,m}(s+r)u_{k,m}^{*}(t+r)e^{\frac{j 2 \pi i\left(s-t\right)}{N}}\right|^2\\
&\geq M \sum_{s=0}^{2N-2}\left|\sum_{j=0}^{K-1}\sum_{i=0}^{Z_y-1}\sum_{r=0}^{2N-2}  w_rp_{i}u_{k,0}(s+r)u_{k,0}^{*}(s+r)\right|^2\\
&=MK^2\sum_{s=0}^{2N-2}\left|\sum_{r=0}^{2N-2}w_ru_{0,0}(s+r)u_{0,0}^{*}(s+r)\right|^2  \\
&=MK^2 (N-\sum_{s, t=0}^{2N-2}\tau_{s,t,N}w_s w_t), 
\end{split}
\end{align*}
where $s+r$ is calculated over $\mathbb{Z}_{2N-1}$, $u_{k,m}(s)=0$ for $N\leq s\leq 2N-2$.
\end{proof}

\begin{lemma}\label{Upbound}
Considering the left side of  (\ref{uhu}), the following inequality holds:
	\begin{align}\label{Upbound_1}
		\left\|\mathbf{U}_{\left(Z_x, Z_y\right)}^H \mathbf{U}_{\left(Z_x, Z_y\right)}\right\|_F^2\leq \frac{KM^2N^2}{Z_y}\sum_{r=0}^{2N-2}w_r^2+K\theta_a^2-K\theta_a^2\sum_{r=0}^{2N-2}w_r^2+K(K-1)\theta_c^2 
	\end{align}	
	and 
	\begin{align}\label{Upbound_2}
		\left\|\mathbf{U}_{\left(Z_x, Z_y\right)}^H \mathbf{U}_{\left(Z_x, Z_y\right)}\right\|_F^2\leq K^2\theta_{\max}^2+\frac{KM^2N^2}{Z_y}\sum_{r=0}^{2N-2}w_r^2-K\theta_{\max}^2\sum_{r=0}^{2N-2}w_r^2.
	\end{align}
\end{lemma}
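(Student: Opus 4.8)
The plan is to evaluate the left-hand side of (\ref{uhu}) by passing, through the identity (\ref{uhu}), to the Gram matrix $\mathbf{U}_{(Z_x,Z_y)}\mathbf{U}_{(Z_x,Z_y)}^H$, whose entries are genuine delay--Doppler ambiguity functions of the DRCSs, and then bounding each entry by $\theta_a$, $\theta_c$, or $\theta_{\max}$. First I would compute the $\bigl((k,i,s),(k',i',s')\bigr)$ entry of $\mathbf{U}_{(Z_x,Z_y)}\mathbf{U}_{(Z_x,Z_y)}^H$. Writing out $\mathbf{u}_{k,m}^{(Z_x)}(v_i)$ and using that $\mathbf{u}_{k,m}=[\mathbf{c}_m^{(k)},\mathbf{0}_{1\times N-1}]$ is zero-padded to length $2N-1$, the inner product collapses, after the change of summation variable $\ell=(s+t)\ \text{mod}\ (2N-1)$, to $\sqrt{p_i p_{i'} w_s w_{s'}}\,\omega_N^{\phi}\,AF_{\mathbf{C}^{(k)},\mathbf{C}^{(k')}}(\tau,\,v_i-v_{i'})$, where $\omega_N^{\phi}$ is a factor of unit modulus and $\tau$ is the representative of $s'-s$ modulo $2N-1$ lying in $\{-(N-1),\dots,N-1\}$. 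The zero-padding is exactly what turns the cyclic correlation into the aperiodic AF, so that $|\tau|\le N-1<N$ and $|v_i-v_{i'}|\le Z_y-1$; hence every pair $(\tau,v_i-v_{i'})$ that occurs lies inside $\Pi=(-Z_x,Z_x)\times(-Z_y,Z_y)$ with $Z_x=N$. Taking squared magnitudes and summing yields
\begin{align*}
\left\|\mathbf{U}_{(Z_x,Z_y)}^H\mathbf{U}_{(Z_x,Z_y)}\right\|_F^2
=\sum_{k,k'=0}^{K-1}\sum_{i,i'=0}^{Z_y-1}\sum_{s,s'=0}^{2N-2}p_i p_{i'} w_s w_{s'}\left|AF_{\mathbf{C}^{(k)},\mathbf{C}^{(k')}}(\tau,\,v_i-v_{i'})\right|^2 .
\end{align*}

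Next I would partition the index tuples into four disjoint classes and bound each separately: (i) the diagonal $k=k',\,i=i',\,s=s'$, where $\tau=0$, $v_i-v_{i'}=0$, and $AF_{\mathbf{C}^{(k)}}(0,0)=\sum_{m}N=MN$; (ii) $k=k',\,s=s',\,i\neq i'$, where $\tau=0$ but $0<|v_i-v_{i'}|<N$, so by Lemma \ref{lemma} the auto-AF of the DRCS vanishes; (iii) $k=k',\,s\neq s'$, where $\tau\neq0$ and the auto-AF is at most $\theta_a$; and (iv) $k\neq k'$, where the cross-AF is at most $\theta_c$. Using $\sum_s w_s=1$, $\sum_i p_i=1$, $\sum_{s\neq s'}w_s w_{s'}=1-\sum_s w_s^2$, and taking the uniform Doppler weights $p_i=1/Z_y$ (so that $\sum_i p_i^2=1/Z_y$), class (i) contributes $\tfrac{KM^2N^2}{Z_y}\sum_s w_s^2$, class (ii) contributes $0$, class (iii) is bounded by $K\theta_a^2\bigl(1-\sum_s w_s^2\bigr)$, and class (iv) by $K(K-1)\theta_c^2$; summing these gives (\ref{Upbound_1}). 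For (\ref{Upbound_2}) I would instead bound the union of classes (iii)--(iv) by $\theta_{\max}$: the total weight $\sum p_i p_{i'} w_s w_{s'}$ over all tuples equals $K^2$, the diagonal weight is $\tfrac{K}{Z_y}\sum_s w_s^2$, and the class-(ii) weight is $K\bigl(1-\tfrac1{Z_y}\bigr)\sum_s w_s^2$, so the remaining weight is $K^2-K\sum_s w_s^2$, which (together with the diagonal contribution and the vanishing class (ii)) produces (\ref{Upbound_2}).

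The main obstacle is the first step: correctly collapsing the inner product to the \emph{aperiodic} ambiguity function. This requires careful bookkeeping of the circular shifts over $\mathbb{Z}_{2N-1}$, verifying that the zero-padding makes the wrapped terms reproduce exactly the aperiodic AF at the signed shift $\tau$ (not the cyclic one, as a small example already shows), and checking that the unit-modulus phase drops out under $|\cdot|^2$. The second delicate point is invoking Lemma \ref{lemma} to annihilate class (ii): it is precisely this cancellation, rather than the crude estimate $|AF|\le\theta_a$ on those zero-delay terms, that yields the coefficients $-K\theta_a^2\sum_s w_s^2$ in (\ref{Upbound_1}) and $-K\theta_{\max}^2\sum_s w_s^2$ in (\ref{Upbound_2}), which is what makes the resulting bound tight.
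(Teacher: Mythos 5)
Your proposal is correct and follows essentially the same route as the paper's proof: expand the Frobenius norm as the weighted sum of squared aperiodic AF magnitudes, isolate the diagonal terms (each equal to $M^2N^2$), annihilate the zero-delay/nonzero-Doppler auto terms via Lemma \ref{lemma}, bound the remaining auto and cross terms by $\theta_a$ and $\theta_c$ (or jointly by $\theta_{\max}$), and use the uniform Doppler weights $p_i=1/Z_y$. The only cosmetic differences are that you fix $p_i=1/Z_y$ at the outset and derive (\ref{Upbound_2}) directly from the class decomposition, whereas the paper keeps $\mathbf{p}$ general until a final Cauchy--Schwarz optimization and obtains (\ref{Upbound_2}) from (\ref{Upbound_1}) via $\theta_a,\theta_c\leq\theta_{\max}$; your explicit treatment of the zero-padding/cyclic-shift bookkeeping is a point the paper glosses over.
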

\begin{proof}
	The left-hand side term of (\ref{uhu}) can be expanded as follows:
\begin{align*}
\begin{split}
\left\|\mathbf{U}_{\left(Z_x, Z_y\right)}^H \mathbf{U}_{\left(Z_x, Z_y\right)}\right\|_F^2
&=\sum_{k,k'=0}^{K-1}\sum_{r,r'=0}^{2N-2}\sum_{i,i'=0}^{Z_y-1}|
AF_{\mathbf{u}_k,\mathbf{u}_{k'}}(\tau_{r,r'},v_{i,i'})|^2p_ip_{i'}w_rw_{r'}\\ &=\sum_{k=0}^{K-1}\sum_{r,r'=0}^{2N-2}\sum_{i,i'=0}^{Z_y-1}|
AF_{\mathbf{u}_k}(\tau_{r,r'},v_{i,i'})|^2p_ip_{i'}w_rw_{r'}\\ 
&+\sum_{k,k'=0,k\neq k_1}^{K-1}\sum_{r,r'=0}^{2N-2}\sum_{i,i'=0}^{Z_y-1}
\left|AF_{\mathbf{u}_k,\mathbf{u}_{k'}}(\tau_{r,r'},v_{i,i'})\right|^2p_ip_{i'}w_rw_{r'}.
\end{split}
\end{align*}
By Lemma \ref{lemma}, we have
\begin{align*}
\begin{split}
&\sum_{k=0}^{K-1}\sum_{r,r'=0}^{2N-2}\sum_{i,i'=0}^{Z_y-1}
|AF_{\mathbf{u}_{k}}(\tau_{r,r'},v_{i,i'})|^2p_ip_{i'}w_rw_{r'}\\
=&K\sum_{r,r'=0}^{2N-2}\sum_{i,i'=0}^{Z_y-1}|AF_{\mathbf{u}_{k}}(\tau_{r,r'},v_{i,i'})|^2p_ip_{i_1}w_rw_{r_1}\\
&+KM^2N^2\sum_{r=0}^{2N-2}\sum_{i=0}^{Z_y-1}p_i^2w_r^2 -K\sum_{i,i'=0}^{Z_y-1}\sum_{r=0}^{2N-2}|AF_{\mathbf{u}_{k}}(\tau_{r},v_{i,i'})|^2 p_ip_{i'}w_r^2.
\end{split}
\end{align*}
Therefore, we have
\begin{align}\label{righth}
\begin{split}
\left\|\mathbf{U}_{\left(Z_x, Z_y\right)}^H \mathbf{U}_{\left(Z_x, Z_y\right)}\right\|_F^2
&\leq  KM^2N^2\sum_{r=0}^{2N-2}\sum_{i=0}^{Z_y-1}p_i^2w_r^2+K\sum_{r,r'=0}^{2N-2}\sum_{i,i'=0}^{Z_y-1}
\theta_a^2 p_ip_{i'}w_rw_{r'}\\ 
&-K\sum_{i,i'=0}^{Z_y-1}\sum_{r=0}^{2N-2}\theta_a^2 p_ip_{i'}w_r^2
+K(K-1)\sum_{r,r'=0}^{2N-2}\sum_{i,i'=0}^{Z_y-1}\theta_c^2p_ip_{i'}w_rw_{r'}\\ &=KM^2N^2\sum_{r=0}^{2N-2}\sum_{i=0}^{Z_y-1}p_i^2w_r^2
+K\theta_a^2-K\sum_{r=0}^{2N-2}\theta_a^2w_r^2+K(K-1)\theta_c^2.
\end{split}
\end{align}
According to  (\ref{weightdoppler}) and Cauchy-Schwarz inequality, we can determine that the minimum value
$\sum_{r=0}^{Z_y-1} p_r^2$ is $\frac{1}{Z_y}$, and this minimum value is achieved if and only if 
\begin{align}\label{wdoppler}
 p_r=\frac{1}{Z_y},0\leq r\leq Z_y-1.
\end{align}
Hence, (\ref{wdoppler}) is the optimal choice of weight vector $\mathbf{p}$ for (\ref{righth}), as it yields the tightest upper bound of 
\begin{align*}
\left\|\mathbf{U}_{\left(Z_x, Z_y\right)}^H \mathbf{U}_{\left(Z_x, Z_y\right)}\right\|_F^2 
& \leq \frac{KM^2N^2}{Z_y}\sum_{r=0}^{2N-2}w_r^2+K\theta_a^2-K\sum_{r=0}^{2N-2}\theta_a^2w_r^2
+K(K-1)\theta_c^2 \nonumber \\
&\leq K^2\theta_{\max}^2+\frac{KM^2N^2}{Z_y}\sum_{r=0}^{2N-2}w_r^2
-K\theta_{\max}^2\sum_{r=0}^{2N-2}w_r^2.
\end{align*}
This completes the proof.
\end{proof}

Next, we present the main conclusion of this paper based on Lemma \ref{Lobound} and Lemma \ref{Upbound}.
\begin{theorem}\label{T2}
For a unimodular $(K, M, N, \theta_{\max}, \Pi)$-DRCS set, where $\Pi = (-N, N) \times (-Z_y, Z_y)$ and $1 \leq Z_y \leq N$, the lower bounds of the aperiodic AF are given by 
\begin{align}\label{T1a}
	\left(1-\sum_{r=0}^{2N-2}w_r^2\right)\theta_a^2+(K-1)\theta_c^2 \geq KM\left(N-\sum_{s, t=0}^{2N-2}\tau_{s,t,N}w_s w_t\right) -\frac{M^2N^2}{Z_y}\sum_{r=0}^{2N-2}w_r^2
\end{align}
and
\begin{align}\label{eq2}
	\theta_{\max}^2\geq M\left(N-\frac{\mathbf{Q}_{2N-1}\left(\mathbf{w},\frac{N(MN-Z_y)}{KZ_y}\right)}{\left(1-\frac{1}{K}\sum_{r=0}^{2N-2}w_r^2\right)}\right),
\end{align}
where the quadratic form
$\mathbf{Q}_{2N-1}(\mathbf{w}, a) \triangleq \mathbf{w}	\mathbf{Q}_{2N-1} \mathbf{w}^T
=a \sum_{i=0}^{2N-2} w_i^2+\sum_{s, t=0}^{2N-2} \tau_{s, t, N} w_s w_t$, 
$\mathbf{Q}_{2N-1}$ is a $(2N-1) \times (2N-1)$ matrix whose diagonal entries are equal to $a$, and for $s \neq t$, its $(s,t)$-th entry is $\tau_{s, t, N}$.
\end{theorem}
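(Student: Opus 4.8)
The plan is to use the Frobenius-norm identity (\ref{uhu}) as a pivot: it equates $\|\mathbf{U}_{(Z_x,Z_y)}\mathbf{U}_{(Z_x,Z_y)}^H\|_F^2$ with $\|\mathbf{U}_{(Z_x,Z_y)}^H\mathbf{U}_{(Z_x,Z_y)}\|_F^2$, so the lower bound of Lemma \ref{Lobound} and either upper bound of Lemma \ref{Upbound} apply to the \emph{same} scalar. Chaining a lower bound against an upper bound therefore produces an inequality relating $\theta_a,\theta_c,\theta_{\max}$ to the weight data, and each of the two claimed inequalities arises from one such chaining. Throughout I would abbreviate $S=\sum_{r=0}^{2N-2}w_r^2$ and $T=\sum_{s,t=0}^{2N-2}\tau_{s,t,N}w_s w_t$ to keep the bookkeeping clean.

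For (\ref{T1a}) I would chain Lemma \ref{Lobound} with the upper bound (\ref{Upbound_1}) to get
\[
MK^2(N-T)\leq \frac{KM^2N^2}{Z_y}S+K\theta_a^2(1-S)+K(K-1)\theta_c^2.
\]
Dividing by $K$ and rearranging (moving the $\tfrac{M^2N^2}{Z_y}S$ term to the other side) reproduces (\ref{T1a}) directly; this part is purely routine.

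For (\ref{eq2}) I would instead chain Lemma \ref{Lobound} with (\ref{Upbound_2}), yielding
\[
MK^2(N-T)\leq K^2\theta_{\max}^2+\frac{KM^2N^2}{Z_y}S-K\theta_{\max}^2 S.
\]
Dividing by $K$, the $\theta_{\max}^2$ terms collect as $\theta_{\max}^2(K-S)$; dividing by the factor $K-S$ (positive in the nondegenerate case $S<K$, which preserves the inequality direction) and solving gives
\[
\theta_{\max}^2\geq \frac{M(N-T)-\frac{M^2N^2}{KZ_y}S}{1-S/K}.
\]
The one genuinely fiddly step is recognizing this ratio as the stated quadratic-form expression. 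I would factor $M$ out of the numerator and write $N$ over the common denominator $1-S/K$; matching the coefficient of $S$ then forces $\tfrac{N}{K}+a=\tfrac{MN^2}{KZ_y}$, i.e.\ $a=\tfrac{N(MN-Z_y)}{KZ_y}$. With $\mathbf{Q}_{2N-1}(\mathbf{w},a)=aS+T$ this is precisely (\ref{eq2}).

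I expect no conceptual obstacle, since the structural work lives entirely in Lemmas \ref{Lobound} and \ref{Upbound}; the only real risk is an arithmetic slip in the coefficient matching for $a$. I would guard against this by substituting the claimed $a=\tfrac{N(MN-Z_y)}{KZ_y}$ back into $M\bigl(N-(aS+T)/(1-S/K)\bigr)$ and confirming it collapses to the explicit ratio derived above.
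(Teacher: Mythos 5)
Your proposal is correct and follows essentially the same route as the paper: both inequalities are obtained by chaining Lemma \ref{Lobound} against (\ref{Upbound_1}) and (\ref{Upbound_2}) respectively via the pivot identity (\ref{uhu}), and your coefficient-matching step for $a=\tfrac{N(MN-Z_y)}{KZ_y}$ is just a repackaging of the paper's add-and-subtract of $\tfrac{N}{K}\sum_r w_r^2$ in the numerator. Your explicit remark that dividing by $K-\sum_r w_r^2$ requires its positivity is a small point of care the paper skips over.
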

\begin{proof}
From (\ref{uhu}), Lemma \ref{Lobound}, and (\ref{Upbound_1}) in Lemma \ref{Upbound}, we have

\begin{align*}
 \frac{KM^2N^2}{Z_y}\sum_{r=0}^{2N-2}w_r^2+K\theta_a^2-K\theta_a^2\sum_{r=0}^{2N-2}w_r^2+K(K-1)\theta_c^2 \geq	MK^2 (N-\sum_{s, t=0}^{2N-2}\tau_{s,t,N}w_s w_t).
\end{align*}
Further, we have 
\begin{align*}
	\left(1-\sum_{r=0}^{2N-2}w_r^2\right)\theta_a^2+(K-1)\theta_c^2 \geq KM\left(N-\sum_{s, t=0}^{2N-2}\tau_{s,t,N}w_s w_t\right) -\frac{M^2N^2}{Z_y}\sum_{r=0}^{2N-2}w_r^2.
\end{align*}

From (\ref{uhu}), Lemma \ref{Lobound}, and (\ref{Upbound_2}) in Lemma \ref{Upbound}, we have
\begin{align*}
	K^2\theta_{\max}^2+\frac{KM^2N^2}{Z_y}\sum_{r=0}^{2N-2}w_r^2-K\theta_{\max}^2\sum_{r=0}^{2N-2}w_r^2\geq MK^2 (N-\sum_{s, t=0}^{2N-2}\tau_{s,t,N}w_s w_t).
\end{align*}
Further, we have 
\begin{align*}
	(K-\sum_{r=0}^{2N-2}w_r^2)\theta_{\max}^2\geq MK (N-\sum_{s, t=0}^{2N-2}\tau_{s,t,N}w_s w_t)-\frac{M^2N^2}{Z_y}\sum_{r=0}^{2N-2}w_r^2
\end{align*}
So, we have
\begin{align*}
\theta_{\max}^2&\geq M\left(\frac{(N-\sum_{s, t=0}^{2N-2}\tau_{s,t,N}w_s w_t)-\frac{MN^2}{KZ_y}\sum_{r=0}^{2N-2}w_r^2}{\left(1-\frac{1}{K}\sum_{r=0}^{2N-2}w_r^2\right)}\right)\\
&=M\left(\frac{(N-\frac{N}{K}\sum_{r=0}^{2N-2}w_r^2+\frac{N}{K}\sum_{r=0}^{2N-2}w_r^2-\sum_{s, t=0}^{2N-2}\tau_{s,t,N}w_s w_t)-\frac{MN^2}{KZ_y}\sum_{r=0}^{2N-2}w_r^2}{\left(1-\frac{1}{K}\sum_{r=0}^{2N-2}w_r^2\right)}\right)\\
&=M\left(\frac{(N-\frac{N}{K}\sum_{r=0}^{2N-2}w_r^2-\sum_{s, t=0}^{2N-2}\tau_{s,t,N}w_s w_t)-\frac{N(MN-Z_y)}{KZ_y}\sum_{r=0}^{2N-2}w_r^2}{\left(1-\frac{1}{K}\sum_{r=0}^{2N-2}w_r^2\right)}\right)\\
&=M\left(N-\frac{\mathbf{Q}_{2N-1}\left(\mathbf{w},\frac{N(MN-Z_y)}{KZ_y}\right)}{\left(1-\frac{1}{K}\sum_{r=0}^{2N-2}w_r^2\right)}\right).
\end{align*}
This completes the proof.
\end{proof}

\begin{remark}
	It is straightforward to see that Theorem \ref{T2} encompasses several classic theoretical bounds as special cases. For instance, when \( Z_y = 1 \), the proposed lower bound reduces to the Liu-Guan-Mow bound for QCSSs in \cite{liu2013tighter}. Furthermore, when \( Z_y = 1 \) and \( M = 1 \), it simplifies to the Levenshtein bound for aperiodic correlation in \cite{levenshtein1999new}.
\end{remark} 

In the following, we derive the lower bound of aperiodic AF of unimodular DRCSs $\mathcal{C}$ in region $\Pi=(-Z_x,Z_x)\times (-Z_y,Z_y)$, where $1\leq Z_x,Z_y\leq N$. In this case, choose a special weight vector, i.e., $\mathbf{w}=(w_0,w_1,\cdots,w_{Z_x-1},\mathbf{0}_{1\times(2N-Z_x-1)}).$ Therefore, the size of matrix $\mathbf{U}_{(Z_x, Z_y)}$ in  (\ref{matrix}) becomes $KZ_xZ_y \times M(2N-1),$ the weighting condition (\ref{weighttime}) can be replaced by
\begin{align*}
	\sum_{r=0}^{Z_x-1} w_r=1,~ w_r \geq 0.
\end{align*}

Similar to the derivation of Theorem \ref{T2} based on Frobenius norm,  we can get the following theorem.
\begin{theorem}\label{T3}
	For a unimodular $(N,K, M, \theta_{\max},\Pi)$-DRCS set, where $\Pi=(-Z_x,Z_x)\times(-Z_y,Z_y)$ and $1\leq Z_x,Z_y<N,$ the lower bounds of the aperiodic AF are given by 
\begin{align}
		\left(1-\sum_{r=0}^{Z_x-1}w_r^2\right)\theta_a^2+(K-1)\theta_c^2 \geq KM\left(N-\sum_{s, t=0}^{Z_x-1}\tau_{s,t,N}w_s w_t\right) -\frac{M^2N^2}{Z_y}\sum_{r=0}^{Z_x-1}w_r^2,
\end{align}
\begin{align}\label{T12}
\theta_{\max}^2\geq M\left(N-\frac{\mathbf{Q}_{Z_x}\left(\mathbf{w},\frac{N(MN-Z_y)}{KZ_y}\right)}
{\left(1-\frac{1}{K}\sum_{r=0}^{Z_x-1}w_r^2\right)}\right) \nonumber \\
\geq M\left(N-\mathbf{Q}_{Z_x}\left(\mathbf{w},\frac{MN^2}{KZ_y}\right)\right).
\end{align}	
\end{theorem}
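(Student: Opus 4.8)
The plan is to mirror the Frobenius-norm argument that established Theorem \ref{T2}, but carried out with the support-restricted weight vector $\mathbf{w}=(w_0,\ldots,w_{Z_x-1},\mathbf{0}_{1\times(2N-Z_x-1)})$. Because $w_r=0$ for $r\geq Z_x$, every weighted sum over $r\in\{0,\ldots,2N-2\}$ appearing in the construction of $\mathbf{U}_{(Z_x,Z_y)}$ collapses to a sum over $r\in\{0,\ldots,Z_x-1\}$, and the matrix $\mathbf{U}_{(Z_x,Z_y)}$ shrinks to size $KZ_xZ_y\times M(2N-1)$. Crucially, the identity (\ref{uhu}) between the two Frobenius norms continues to hold verbatim, so the same two-sided estimate strategy applies.

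First I would re-run Lemma \ref{Lobound} on the right-hand side of (\ref{uhu}): the derivation is identical except that the inner weighted sum ranges only over $r<Z_x$, yielding the truncated lower bound $\|\mathbf{U}_{(Z_x,Z_y)}\mathbf{U}_{(Z_x,Z_y)}^H\|_F^2\geq MK^2\bigl(N-\sum_{s,t=0}^{Z_x-1}\tau_{s,t,N}w_sw_t\bigr)$. Then I would re-run Lemma \ref{Upbound} on the left-hand side, again replacing $\sum_{r=0}^{2N-2}$ by $\sum_{r=0}^{Z_x-1}$ throughout; the optimal Doppler weight $p_r=1/Z_y$ is unaffected, so I obtain the two truncated upper bounds with $\sum_{r=0}^{Z_x-1}w_r^2$ in place of $\sum_{r=0}^{2N-2}w_r^2$. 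Chaining these through (\ref{uhu}) exactly as in the proof of Theorem \ref{T2} produces the first displayed inequality of the theorem and the middle expression of (\ref{T12}), namely $\theta_{\max}^2\geq M\bigl(N-\mathbf{Q}_{Z_x}(\mathbf{w},a)/(1-\tfrac{1}{K}\sum_{r=0}^{Z_x-1}w_r^2)\bigr)$ with $a=N(MN-Z_y)/(KZ_y)$.

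The only genuinely new step is the final relaxation to $M\bigl(N-\mathbf{Q}_{Z_x}(\mathbf{w},b)\bigr)$ with $b=MN^2/(KZ_y)$. Writing $S=\sum_{r=0}^{Z_x-1}w_r^2$ and $T=\sum_{s,t=0}^{Z_x-1}\tau_{s,t,N}w_sw_t$, so that $\mathbf{Q}_{Z_x}(\mathbf{w},c)=cS+T$, I would clear the denominator $1-S/K$, which is positive since $S\leq 1$ and $K\geq 2$. Using the key arithmetic identity $b-a=N/K$, i.e. $K(b-a)=N$, the target inequality $\mathbf{Q}_{Z_x}(\mathbf{w},a)/(1-S/K)\leq\mathbf{Q}_{Z_x}(\mathbf{w},b)$ simplifies, after multiplying out and dividing by $S>0$, to the single clean condition $\mathbf{Q}_{Z_x}(\mathbf{w},b)\leq N$.

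I expect this relaxation to be the main (if modest) obstacle, since it is the one place where the argument departs from a mechanical truncation of the Theorem \ref{T2} proof. Its correct handling requires tracking the precise values of $a$ and $b$ and the sign of $1-S/K$. The residual condition $\mathbf{Q}_{Z_x}(\mathbf{w},b)\leq N$ is exactly the statement that the relaxed bound $M\bigl(N-\mathbf{Q}_{Z_x}(\mathbf{w},b)\bigr)$ is non-negative, i.e. the regime in which the bound is non-vacuous; outside that regime the relaxed bound is $\leq 0\leq\theta_{\max}^2$ and holds trivially, so the stated inequality is recovered in all cases of interest.
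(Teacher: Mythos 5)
Your proposal is correct and follows exactly the route the paper intends: the paper omits the proof of Theorem \ref{T3} entirely, saying only that it ``follows a similar approach to that of Theorem \ref{T2},'' which is precisely the truncated-weight-vector rerun of Lemmas \ref{Lobound} and \ref{Upbound} that you describe. The one step the paper's omitted proof does not cover --- the final relaxation in (\ref{T12}) to $M\bigl(N-\mathbf{Q}_{Z_x}(\mathbf{w},\tfrac{MN^2}{KZ_y})\bigr)$ --- you handle correctly: with $a=\tfrac{N(MN-Z_y)}{KZ_y}$, $b=\tfrac{MN^2}{KZ_y}$, $S=\sum_{r=0}^{Z_x-1}w_r^2$ and $b-a=N/K$, clearing the positive denominator $1-S/K$ reduces the chained inequality exactly to $\mathbf{Q}_{Z_x}(\mathbf{w},b)\leq N$, and in the complementary regime the relaxed bound is negative so the conclusion $\theta_{\max}^2\geq M\bigl(N-\mathbf{Q}_{Z_x}(\mathbf{w},b)\bigr)$ holds trivially. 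This case split is a genuine (if small) addition worth recording, since as literally written the second ``$\geq$'' in (\ref{T12}) can fail when $\mathbf{Q}_{Z_x}(\mathbf{w},b)>N$ (e.g.\ for small $KZ_y$ with a concentrated weight vector), even though the theorem's final conclusion remains valid.
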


\begin{proof}
The proof of Theorem \ref{T3} follows a similar approach to that of Theorem \ref{T2} and is therefore omitted for brevity.
\end{proof}


\begin{remark}
It is worth noting that when different parameters are selected, Theorem \ref{T3} can degenerate into some classic theoretical bounds, such as Meng-Guan-Ge-Liu-Fan bound \cite{meng2024new}, Liu-Zhou-Udaya bound \cite{liu2021tighter}, and Peng-Fan bound \cite{peng2004generalised}.
\end{remark}


\section{Discussions on the Proposed Lower Bounds for DRCSs}
In the previous section, we derived the theoretical lower bound on the sidelobes of the aperiodic AF for DRCSs, which is closely related to the choice of weight vectors. In this section, we discuss three specific weight vectors previously used in \cite{levenshtein1999new} and \cite{liu2013tighter}, and demonstrate that the proposed theoretical lower bound is tighter than the Shen-Yang-Zhou-Liu-Fan bound.

Next, we consider a uniform weight vector proposed by \cite{levenshtein1999new}, defined as  
\begin{align}\label{weightvection1}
	w_r=\frac{1}{2N-1}, \quad 0\le r\le 2N-2.
\end{align}
By applying the weight vector in (\ref{weightvection1}) to Theorem \ref{T2}, we obtain an improved lower bound for DRCSs as follows.

\begin{corollary}
For weight vector  (\ref{weightvection1}), the lower bounds of the aperiodic AF are given by
\begin{align}\label{C11}
	\frac{2N}{2N-1}\theta_a^2+(K-1)\theta_c^2 \geq M^2N^2\frac{KZ_y/M-1}{(2N-1)Z_y}
\end{align}
and
\begin{align}\label{C12}
	\theta_{\max}^2\geq M^2N^2\frac{KZ_y/M-1}{[K(2N-1)-1]Z_y}.
\end{align}
\end{corollary}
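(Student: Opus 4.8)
The plan is to prove the corollary by directly substituting the uniform weight vector $w_r = 1/(2N-1)$ from (\ref{weightvection1}) into the two inequalities (\ref{T1a}) and (\ref{eq2}) of Theorem \ref{T2} and then simplifying. Only two weight-dependent quantities appear in those inequalities: the sum of squares $\sum_{r=0}^{2N-2} w_r^2$ and the weighted $\tau$-sum $\sum_{s,t=0}^{2N-2}\tau_{s,t,N} w_s w_t$ (the same $\tau$-sum also sits inside $\mathbf{Q}_{2N-1}(\mathbf{w},a)$). The first is immediate: since all $2N-1$ entries equal $1/(2N-1)$, we get $\sum_{r=0}^{2N-2} w_r^2 = (2N-1)\cdot(2N-1)^{-2} = 1/(2N-1)$.

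The main step is evaluating $\sum_{s,t=0}^{2N-2}\tau_{s,t,N} w_s w_t = (2N-1)^{-2} S$, where $S := \sum_{s,t=0}^{2N-2}\tau_{s,t,N}$. I would group the ordered pairs $(s,t)$ by their distance $d = |s-t|$: the term $d=0$ contributes nothing, and for each $d \in \{1,\dots,2N-2\}$ there are exactly $2(2N-1-d)$ pairs, each contributing $\tau = \min\{d,\, 2N-1-d\}$. Splitting the range at $d = N$, so that $\tau = d$ for $1 \le d \le N-1$ and $\tau = 2N-1-d$ for $N \le d \le 2N-2$, and applying the standard arithmetic- and sum-of-squares formulas, the two halves combine to $S = N(N-1)(2N-1)$. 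Hence $\sum_{s,t}\tau_{s,t,N} w_s w_t = N(N-1)/(2N-1)$, and the coefficient on $\theta_a^2$ in (\ref{T1a}) becomes $1 - \sum_r w_r^2 = (2N-2)/(2N-1)$.

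With these two values in hand the rest is algebraic. For (\ref{C11}) I would substitute into the right side of (\ref{T1a}): since $N - N(N-1)/(2N-1) = N^2/(2N-1)$, the right side collapses to $\frac{MN^2}{2N-1}\left(K - \frac{M}{Z_y}\right)$, which is exactly $M^2N^2\frac{KZ_y/M - 1}{(2N-1)Z_y}$. For (\ref{C12}) I would evaluate $\mathbf{Q}_{2N-1}(\mathbf{w},a)$ at $a = \frac{N(MN-Z_y)}{KZ_y}$ as $a\sum_r w_r^2 + \sum_{s,t}\tau_{s,t,N}w_s w_t$, divide by the denominator $1 - \frac{1}{K}\sum_r w_r^2 = \frac{K(2N-1)-1}{K(2N-1)}$, and subtract the result from $N$. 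I expect this last simplification to be the main obstacle: several terms must cancel so that the numerator collapses cleanly to $N(KZ_y - M)$, giving $\theta_{\max}^2 \ge \frac{MN^2(KZ_y-M)}{Z_y[K(2N-1)-1]}$, which is (\ref{C12}). To keep the bookkeeping manageable I would combine $N$ with the divided quadratic form over the common denominator $Z_y[K(2N-1)-1]$ before expanding, so that the $K(2N-1)$ factors cancel in pairs and only the telescoping remainder survives.
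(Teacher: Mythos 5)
Your proposal is correct and takes essentially the same approach as the paper: substitute the uniform weights into Theorem \ref{T2}, using $\sum_{r}w_r^2=\tfrac{1}{2N-1}$ and $\sum_{s,t}\tau_{s,t,N}w_sw_t=\tfrac{N(N-1)}{2N-1}$ (your computation $S=N(N-1)(2N-1)$ is exactly the implicit step in the paper), and note that both you and the paper actually obtain the coefficient $\tfrac{2N-2}{2N-1}$ on $\theta_a^2$, which is stronger than and implies the stated $\tfrac{2N}{2N-1}$ form of (\ref{C11}). The only minor divergence is that the paper deduces (\ref{C12}) directly from the substituted form of (\ref{T1a}) via $\theta_a,\theta_c\le\theta_{\max}$ and $\tfrac{2N-2}{2N-1}+K-1=\tfrac{K(2N-1)-1}{2N-1}$, whereas you substitute into (\ref{eq2}) and carry the cancellation down to $N(KZ_y-M)$; both routes land on the same bound.
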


\begin{proof}
By applying the weight vector (\ref{weightvection1}) to (\ref{T1a}), we can transform (\ref{T1a}) into  
\begin{align*}
	\left(1-\frac{1}{2N-1}\right)\theta_a^2+(K-1)\theta_c^2 \geq KMN\left(N-\frac{N(N-1)}{2N-1}\right) -\frac{M^2N^2}{Z_y}\frac{1}{2N-1}.
\end{align*}
Consequently, (\ref{C11}) and (\ref{C12}) follow as direct results.
\end{proof}

\begin{corollary}\label{cor-kmnzy}
If $K$, $M$, $N$, and $Z_y$ satisfy the following relation:  
\begin{align*}
	K \leq \left\lfloor \frac{4(M N-Z_y) N}{Z_y} \sin^2 \frac{\pi}{2(2N-1)}\right\rfloor,
\end{align*}  
then the theoretical bound in (\ref{C12}) cannot be further improved. In particular, as $N$ approaches positive infinity, we have  
\begin{align*}
	K \leq \left\lfloor \frac{\pi^2 M}{4Z_y} \right\rfloor.
\end{align*}
\end{corollary}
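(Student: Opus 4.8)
The plan is to read ``cannot be further improved'' as the optimality of the uniform weight vector: among all admissible $\mathbf{w}$ with $\sum_{r=0}^{2N-2}w_r=1$ and $w_r\geq 0$, the choice (\ref{weightvection1}) already minimizes the quotient on the right of (\ref{eq2}). Since Theorem \ref{T2} gives $\theta_{\max}^2\geq M\bigl(N-g(\mathbf{w})\bigr)$ with
\[
g(\mathbf{w})=\frac{\mathbf{Q}_{2N-1}\!\left(\mathbf{w},a\right)}{1-\frac1K\sum_{r=0}^{2N-2}w_r^2},\qquad a=\frac{N(MN-Z_y)}{KZ_y},
\]
the tightest member of this family corresponds to the $\mathbf{w}$ minimizing $g$, and (\ref{C12}) is exactly $M\bigl(N-g(\mathbf{w}^\star)\bigr)$ for the uniform vector $\mathbf{w}^\star=\mathbf{1}/(2N-1)$. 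Hence I would prove that, under the hypothesis on $K$, $\mathbf{w}^\star$ is the global minimizer of $g$ over the simplex.

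First I would exploit the circulant structure. Writing $L=2N-1$, we have $\mathbf{Q}_{2N-1}=a\mathbf{I}+\mathbf{C}$, where $\mathbf{C}$ is the symmetric circulant with entries $c_k=\min\{k,L-k\}$, i.e.\ the circular-distance matrix on $\mathbb{Z}_L$. The all-ones vector is an eigenvector of $\mathbf{C}$ with eigenvalue $\lambda_0=\sum_k c_k=N(N-1)$, so $\mathbf{w}^\star$ is aligned with it. Setting $\mathbf{w}=\mathbf{1}/L+\mathbf{v}$ with $\mathbf{1}^T\mathbf{v}=0$, the cross term $\mathbf{1}^T\mathbf{C}\mathbf{v}$ vanishes, so both numerator and denominator of $g$ depend on $\mathbf{v}$ only through $x=\|\mathbf{v}\|^2$ and $y=\mathbf{v}^T\mathbf{C}\mathbf{v}$. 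Clearing the (positive) denominators, I would reduce $g(\mathbf{w})\geq g(\mathbf{w}^\star)$ to the scalar condition
\[
y\geq-\Bigl(a+\tfrac1K\,g(\mathbf{w}^\star)\Bigr)x\qquad\text{for all }\mathbf{v}\perp\mathbf{1}.
\]
Because $y\geq\lambda_{\min}x$ on $\mathbf{1}^{\perp}$, with $\lambda_{\min}$ the smallest eigenvalue of $\mathbf{C}$ on that subspace, and since $a+\frac1K g(\mathbf{w}^\star)\geq a>0$, it suffices to verify $\lambda_{\min}\geq -a$.

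The crux is computing $\lambda_{\min}$. Using the circulant formula $\lambda_j=2\sum_{k=1}^{N-1}k\cos(2\pi jk/L)$ and summing the resulting arithmetic--geometric series, I would obtain $\lambda_j=-\sin^2\!\bigl((N-1)\pi j/L\bigr)/\sin^2(\pi j/L)$, which simplifies (via $(N-1)\pi/L=\tfrac\pi2-\tfrac{\pi}{2L}$ and the double-angle identity) to $\lambda_1=-\bigl[4\sin^2(\pi/(2L))\bigr]^{-1}$; I would then confirm that $j=1$ (equivalently $j=L-1$) yields the most negative eigenvalue, so $\lambda_{\min}=-\bigl[4\sin^2(\pi/(2(2N-1)))\bigr]^{-1}$. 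The requirement $\lambda_{\min}\geq -a$ then reads $a\geq\bigl[4\sin^2(\pi/(2(2N-1)))\bigr]^{-1}$; substituting $a=N(MN-Z_y)/(KZ_y)$ and solving for $K$ gives $K\leq\frac{4(MN-Z_y)N}{Z_y}\sin^2\frac{\pi}{2(2N-1)}$, and integrality of $K$ permits taking the floor, which is precisely the stated bound. The asymptotic claim follows from $\sin^2(\pi/(2(2N-1)))\sim\pi^2/(16N^2)$ and $\frac{4(MN-Z_y)N}{Z_y}\sim 4MN^2/Z_y$, whose product tends to $\pi^2 M/(4Z_y)$.

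I expect the main obstacle to be the eigenvalue computation: deriving the clean closed form for $\lambda_j$ from the weighted cosine sum and, above all, rigorously confirming that the extreme (most negative) eigenvalue of $\mathbf{C}$ occurs at $j=1$ rather than at an intermediate frequency. A secondary technical point is the bookkeeping that $1-\frac1K\|\mathbf{w}\|_2^2$ stays strictly positive over the simplex (which holds since $\|\mathbf{w}\|_2^2\leq 1\leq K$, degenerate cases aside), so that the cross-multiplication preserves the inequality direction; I would also record that discarding the nonnegative term $\frac1K g(\mathbf{w}^\star)$ only strengthens the required condition, so the stated hypothesis on $K$ is sufficient.
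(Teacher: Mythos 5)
Your proposal is correct, and its core coincides with the paper's own proof: both read ``cannot be further improved'' as the statement that the uniform weight vector globally optimizes the right-hand side of (\ref{eq2}), and both arrive at the identical eigenvalue condition --- nonnegativity of all secondary eigenvalues of $\mathbf{Q}_{2N-1}=a\mathbf{I}+\mathbf{C}$, whose most negative circulant eigenvalue is $-1/\bigl(4\sin^2\frac{\pi}{2(2N-1)}\bigr)$ (attained at $j=1$ and $j=2N-2$), which with $a=\frac{N(MN-Z_y)}{KZ_y}$ and integrality of $K$ gives exactly the stated floor bound; the asymptotic claim is handled the same way in both. The difference is in how the optimality of the uniform vector is justified. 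The paper cites Berlekamp \cite{berlekamp2015algebraic} for two facts: the eigenvalue formulas $\lambda_k=a-\frac{1-(-1)^k\cos\frac{\pi k}{2N-1}}{2\sin^2\frac{\pi k}{2N-1}}$, and the result that their nonnegativity (convexity) forces the global minimum of $\mathbf{Q}_{2N-1}(\mathbf{w},a)$ at the uniform vector; it then optimizes the denominator $1-\frac1K\sum_r w_r^2$ \emph{separately} (maximized at the uniform vector by Cauchy--Schwarz) to conclude the ratio is minimized there. Your argument is self-contained on both points: you rederive the circulant spectrum (your closed form $-\sin^2\bigl((N-1)\pi j/L\bigr)/\sin^2(\pi j/L)$ agrees with Berlekamp's via the odd/even split into $-1/\bigl(4\sin^2(\pi j/(2L))\bigr)$ and $-1/\bigl(4\cos^2(\pi j/(2L))\bigr)$, which also settles your worry about intermediate frequencies, since monotonicity of $\sin$ and $\cos$ on $(0,\pi/2)$ pins the extremes at $j=1$ and $j=L-1$), and you treat the ratio \emph{jointly} through the perturbation $\mathbf{w}=\mathbf{1}/L+\mathbf{v}$ with $\mathbf{1}^T\mathbf{v}=0$, reducing everything to the scalar inequality $y\ge-\bigl(a+\tfrac1K g(\mathbf{w}^\star)\bigr)x$. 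What your route buys is a proof that does not outsource the convexity step and that handles numerator and denominator in one stroke; what the paper's route buys is brevity. Your two flagged technical points (denominator positivity, and discarding the term $\tfrac1K g(\mathbf{w}^\star)\ge 0$) are handled correctly and match the paper's implicit assumptions.
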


\begin{proof}
Before proving Corollary \ref{cor-kmnzy}, let us first review two key results from Berlekamp \cite{berlekamp2015algebraic}:

\begin{itemize}
	\item[1)] The principal eigenvalue and the secondary eigenvalues of the quadratic matrix $\mathbf{Q}_{2N-1}$ are given by:
	\begin{align*}
		\lambda_0 &= a + (N-1)N, \\
		\lambda_k &= a - \frac{1 - (-1)^k \cos \frac{\pi k}{2N-1}}{2 \sin^2 \frac{\pi k}{2N-1}},
	\end{align*}
	where $1 \leq k \leq 2N-2$.
	\item[2)] The quadratic function $Q_{2N-1}(\mathbf{w}, a)$ is convex if all the secondary eigenvalues $\lambda_k \geq 0$. In this case, the global minimum is achieved when $\mathbf{w} = \frac{1}{2N-1}(1,1, \dots, 1)$. Furthermore, it is evident that the term $1-\frac{1}{K} \sum_{i=0}^{2N-2} w_i^2$ attains its global maximum when $\mathbf{w} = \frac{1}{2N-1}(1,1, \dots, 1)$. Therefore, the right-hand side of (\ref{eq2}) reaches its global maximum under this choice of $\mathbf{w}$.
\end{itemize}

Based on the above two results, the bound (\ref{C12}) cannot be further improved if
\begin{align*}
	\min _{1 \leq k \leq 2N-2} \lambda_k 
	=  \frac{(M N - Z_y) N}{K Z_y} - \frac{1}{4 \sin^2 \frac{\pi}{2(2N-1)}}
	\geq  0,
\end{align*}
which is equivalent to
\begin{align*}
	K \leq \left\lfloor \frac{4(MN - Z_y) N}{Z_y} \sin^2 \frac{\pi}{2(2N-1)}\right\rfloor.
\end{align*}
This completes the proof.
\end{proof}

In the previous, the weight vector was divided into $2N-1$ equal parts, ensuring that all time-shifts were taken into account. In this subsection, we consider another uniform weight vector proposed by Levenshtein \cite{levenshtein1999new}, which is defined as

\begin{align}\label{wv1}
	w_r= 
	\begin{cases}
		\frac{1}{m}, & r \in\{0,1, \ldots, m-1\},  \\ 
		0, & r \in\{m, m+1, \ldots, Z_x-1\},
	\end{cases}
\end{align}
where $1\leq m\leq Z_x\leq N$.

\begin{corollary} 
	For weight vector  (\ref{wv1}), the lower bound of the aperiodic AF is given by
	\begin{align}\label{Wv1}
      \theta_{\max}^2 \geq \frac{3mKMNZ_y-KM(m^2-1)Z_y-3M^2N^2}{3(mK-1)Z_y}.
	\end{align}
\end{corollary}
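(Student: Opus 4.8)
The plan is to specialize the first inequality in (\ref{T12}) of Theorem \ref{T3} to the weight vector defined in (\ref{wv1}) and then simplify the resulting expression by direct computation. Since (\ref{wv1}) places equal mass $1/m$ on the indices $0,1,\dots,m-1$ and zero elsewhere, the first quantity I would evaluate is $\sum_{r=0}^{Z_x-1} w_r^2 = m\cdot(1/m)^2 = 1/m$. This immediately turns the denominator $1-\frac{1}{K}\sum_{r=0}^{Z_x-1}w_r^2$ appearing in (\ref{T12}) into $\frac{Km-1}{Km}$.

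Next I would evaluate the quadratic form $\mathbf{Q}_{Z_x}(\mathbf{w},a)$ with $a=\frac{N(MN-Z_y)}{KZ_y}$. Its diagonal contribution is $a\sum_r w_r^2 = a/m$, while its off-diagonal contribution is $\frac{1}{m^2}\sum_{s,t=0}^{m-1}\tau_{s,t,N}$. The key simplification here is that, because the support is confined to indices in $\{0,\dots,m-1\}$ with $m\le N$, the gap satisfies $|t-s|\le m-1\le N-1 < 2N-1-|t-s|$ throughout, so $\tau_{s,t,N}=\min\{|t-s|,2N-1-|t-s|\}=|t-s|$ on the entire support. Hence the off-diagonal term collapses to $\frac{1}{m^2}\sum_{s,t=0}^{m-1}|t-s|$, with the wraparound branch of $\tau_{s,t,N}$ playing no role.

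The main computational step is the closed form of this triangular double sum. Writing $d=|t-s|$ and noting that each value $d\in\{1,\dots,m-1\}$ arises from $m-d$ ordered pairs on each side of the diagonal, I would obtain $\sum_{s,t=0}^{m-1}|t-s| = 2\sum_{d=1}^{m-1} d(m-d) = \frac{m(m^2-1)}{3}$, using $\sum_{d=1}^{m-1} d = \frac{m(m-1)}{2}$ and $\sum_{d=1}^{m-1} d^2 = \frac{(m-1)m(2m-1)}{6}$. Consequently the off-diagonal term equals $\frac{m^2-1}{3m}$, and so $\mathbf{Q}_{Z_x}(\mathbf{w},a) = \frac{1}{m}\left(\frac{N(MN-Z_y)}{KZ_y}+\frac{m^2-1}{3}\right)$.

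Finally I would substitute these quantities into (\ref{T12}). Dividing $\mathbf{Q}_{Z_x}(\mathbf{w},a)$ by $\frac{Km-1}{Km}$ cancels the factor $1/m$ and leaves $\frac{1}{Km-1}\left(\frac{N(MN-Z_y)}{Z_y}+\frac{K(m^2-1)}{3}\right)$; subtracting this from $N$, multiplying through by $M$, and clearing denominators over the common factor $3(Km-1)Z_y$ yields exactly (\ref{Wv1}). I do not anticipate a genuine obstacle, since the argument is a direct substitution into Theorem \ref{T3}; the only two places that demand care are verifying $\tau_{s,t,N}=|t-s|$ on the support (which is precisely where the hypothesis $m\le N$ is used) and the arithmetic of the triangular sum and the final fraction collection.
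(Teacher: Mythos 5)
Your proof is correct and follows essentially the same route as the paper: compute $\sum_r w_r^2 = 1/m$, reduce the quadratic form to the triangular sum $\sum_{s,t=0}^{m-1}|s-t|\,w_s w_t = \frac{(m-1)(m+1)}{3m}$ (noting $\tau_{s,t,N}=|s-t|$ on the support), and rearrange to obtain (\ref{Wv1}). The only cosmetic difference is that you instantiate the packaged first inequality of (\ref{T12}) in Theorem \ref{T3}, whereas the paper substitutes into the equivalent unpacked form of (\ref{eq2}) from Theorem \ref{T2}; the algebra is identical.
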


\begin{proof}
By definition, it is obvious that
\begin{align*}
      \sum_{r=0}^{2N-2}w_r^2=\frac{1}{m}.
\end{align*}
Hence,  (\ref{eq2}) can be written as
\begin{align*}
	\theta_{\max}^2(K-\frac{1}{m})&\geq KMN-KM\sum_{s, t=0}^{m-1}|s-t| w_s w_t-M^2N^2\frac{1}{Z_ym} \nonumber \\
	&=KMN-KM\frac{(m-1)(m+1)}{3m}-M^2N^2\frac{1}{Z_ym}.
\end{align*}	
Further, the above inequality can be converted to
\begin{align*}
	\theta_{\max}^2 \geq  \frac{3mKMNZ_y-KM(m^2-1)Z_y-3M^2N^2}{3(mK-1)Z_y}.
\end{align*}
This completes the proof.
\end{proof}

The value of the right-hand side of the inequality (\ref{Wv1}) is determined by selecting an appropriate $m$, which in turn leads to the following theoretical lower bound.

\begin{corollary} \label{LevAFc1}
For \( K > \frac{3M}{Z_y} \) and \( N \sqrt{\frac{3M}{KZ_y}} \leq Z_x \leq N \), the lower bounds of the aperiodic AF are given by
\begin{align}
	(1 - \frac{KZ_y}{3MN^2}) \theta_a^2 + (K - 1) \theta_c^2 \geq KM \left(N - 2N \sqrt{\frac{M}{3KZ_y}}\right),
\end{align}
and
\begin{align}\label{C1eq}
	\theta_{\max}^2 \geq MN \left( 1 - 2 \sqrt{\frac{M}{3KZ_y}} \right).
\end{align}
\end{corollary}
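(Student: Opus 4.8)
The plan is to regard $m$ in the preceding corollary as a free parameter. Inequality (\ref{Wv1}) is a valid lower bound on $\theta_{\max}^2$ for every integer $m$ with $1\le m\le Z_x$, so I am free to choose the $m$ whose right-hand side is largest. Denote that right-hand side by $g(m)$ and split off its leading part, writing $g(m)=\tilde g(m)+\varepsilon(m)$, where
\begin{align*}
\tilde g(m)=MN-\frac{Mm}{3}-\frac{M^2N^2}{mKZ_y}
\end{align*}
is what remains after dropping the $+1$ in $m^2-1$ and the $-1$ in $mK-1$, and $\varepsilon(m)$ is the lower-order remainder. The core of the argument is to maximize $\tilde g$ over $m$.

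First I would minimize the two $m$-dependent terms subtracted in $\tilde g$. Their product $\tfrac{Mm}{3}\cdot\tfrac{M^2N^2}{mKZ_y}=\tfrac{M^3N^2}{3KZ_y}$ is independent of $m$, so the AM--GM inequality yields
\begin{align*}
\frac{Mm}{3}+\frac{M^2N^2}{mKZ_y}\ \ge\ 2\sqrt{\frac{M^3N^2}{3KZ_y}}=2MN\sqrt{\frac{M}{3KZ_y}},
\end{align*}
with equality precisely at $\hat m=N\sqrt{3M/(KZ_y)}$, for which $\hat m^{2}=3MN^2/(KZ_y)$. Hence $\max_m\tilde g(m)=\tilde g(\hat m)=MN\bigl(1-2\sqrt{M/(3KZ_y)}\bigr)$, which is exactly the right-hand side of (\ref{C1eq}).

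Next I would check that $\hat m$ is admissible and transfer the conclusion back to the exact bound. The hypothesis $K>3M/Z_y$ gives $3M/(KZ_y)<1$, hence $\hat m<N$; the hypothesis $N\sqrt{3M/(KZ_y)}\le Z_x$ is literally $\hat m\le Z_x$; and $\hat m\ge1$ in the relevant range (e.g.\ for all sufficiently large $N$). Putting $m=\hat m$ in (\ref{Wv1}) and using $\varepsilon(\hat m)\ge0$ then gives $\theta_{\max}^2\ge g(\hat m)=\tilde g(\hat m)+\varepsilon(\hat m)\ge MN\bigl(1-2\sqrt{M/(3KZ_y)}\bigr)$, which is (\ref{C1eq}). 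The companion inequality in $\theta_a,\theta_c$ follows by inserting the same weight vector (\ref{wv1}) with $m=\hat m$ into the first inequality of Theorem \ref{T3}: its right-hand side collapses, through the same AM--GM equality, to $KM\bigl(N-2N\sqrt{M/(3KZ_y)}\bigr)$ plus a non-negative term, while the coefficient $1-\tfrac1{\hat m}$ of $\theta_a^2$ may be relaxed to $1-\tfrac1{\hat m^{2}}=1-\tfrac{KZ_y}{3MN^2}$ (legitimate since $\hat m\ge1$ and $\theta_a^2\ge0$), reproducing the stated form.

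The hard part will be the passage from the continuous optimum to a rigorous bound, since $m$ must be an integer and the exact formula (\ref{Wv1}) carries the remainder $\varepsilon(m)$ that I suppressed when optimizing $\tilde g$. I would dispatch both issues together. A short sign check settles the remainder: at $m=\hat m$ the difference $g(\hat m)-\tilde g(\hat m)$ reduces to a positive multiple of $K+3N-2\hat m$, which is positive because $\hat m<N$, so $\varepsilon(\hat m)\ge0$ and the corrections only strengthen the bound. If $\hat m\notin\mathbb{Z}$, I would replace it by the nearest feasible integer; the induced loss is $O(1/N)$ and vanishes as $N\to\infty$, which is exactly the regime in which the constructions of the next section attain (\ref{C1eq}).
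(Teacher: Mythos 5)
Your overall strategy (uniform weight vector (\ref{wv1}), optimize the block length $m$ near $N\sqrt{3M/(KZ_y)}$, AM--GM to locate the optimum) is the same as the paper's, and your sign check that $g(\hat m)-\tilde g(\hat m)$ is a positive multiple of $K+3N-2\hat m$ is correct. But there is a genuine gap: inequality (\ref{Wv1}) is only available for \emph{integer} $m$, because $m$ counts the nonzero entries of the weight vector in (\ref{wv1}), whereas your argument instantiates it at the continuous optimizer $\hat m=N\sqrt{3M/(KZ_y)}$, which is in general irrational. Your proposed repair --- replace $\hat m$ by the nearest feasible integer and accept an ``$O(1/N)$'' loss --- does not prove the corollary as stated: (\ref{C1eq}) is an exact inequality that must hold for all admissible finite parameters (it is the benchmark against which the optimality factors $\rho$ of Sections V--VI are computed, at finite $N$), not an asymptotic statement, and the size of the rounding loss is asserted rather than estimated or absorbed.

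The paper's proof is built precisely to eliminate this integrality issue. It takes the integer $\hat m=m^*-\epsilon$ with $m^*=N\sqrt{3M/(KZ_y)}$ and $0\le\epsilon<1$ (i.e.\ $\hat m=\lfloor m^*\rfloor$), applies the reduced form of (\ref{T1a}) at this $\hat m$, and observes that the required inequality $KMN-KM\frac{\hat m^2-1}{3\hat m}-\frac{M^2N^2}{Z_y\hat m}\ge KM\bigl(N-2N\sqrt{M/(3KZ_y)}\bigr)$ is \emph{equivalent} to $(\hat m-m^*)^2\le 1$, which holds exactly because $\epsilon<1$: the spare $-1$ in $m^2-1$ (your $\varepsilon$-term) is exactly what pays for the integer rounding. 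In other words, you have the right slack in hand but spend it at the wrong point: rather than invoking $\varepsilon(\hat m)\ge 0$ at the non-admissible continuous optimum, you must evaluate at $\lfloor m^*\rfloor$ and show the slack dominates the rounding loss, which equals $\tilde g(m^*)-\tilde g(\lfloor m^*\rfloor)=M\epsilon^2/(3\lfloor m^*\rfloor)$; the same algebra shows it does. As written, your proof establishes only an asymptotic version of (\ref{C1eq}), not the corollary itself.
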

\begin{proof}
By the weight vector in  (\ref{wv1}), we have (\ref{T1a}) can be reduced as
\begin{align*}
	(1-\frac{1}{m})\theta_a^2+(K-1)\theta_c^2 \geq KMN-KM\frac{(m-1)(m+1)}{3m}-\frac{M^2N^2}{Z_ym}.
\end{align*}
If $m$ arounds $N\sqrt{\frac{3M}{KZ_y}}$, and let $\hat{m}=N\sqrt{\frac{3M}{KZ_y}}-\epsilon$ and $0\leq \epsilon<1$, then
\begin{align*}
\begin{split}
	(1-\frac{1}{m})\theta_a^2+(K-1)\theta_c^2 &\geq 	(1-\frac{1}{\hat{m}})\theta_a^2+(K-1)\theta_c^2\\
	&\geq KMN-KM\frac{\hat{m}^2-1}{3\hat{m}}-\frac{M^2N^2}{Z_y\hat{m}}\\
	&\geq KM(N-2N\sqrt{\frac{M}{3KZ_y}}).
\end{split}
\end{align*}
So, we have 
\begin{align*}
	(1-\frac{KZ_y}{3MN^2})\theta_a^2+(K-1)\theta_c^2 \geq 	
	KM(N-2N\sqrt{\frac{M}{3KZ_y}}).
\end{align*}
Further, the above equation can be written as
\begin{align*}
	K\theta_{\max}^2 \geq (1-\frac{KZ_y}{3MN^2})\theta_a^2+(K-1)\theta_c^2 \geq 	
	KM(N-2N\sqrt{\frac{M}{3KZ_y}}),
\end{align*}
which is equivalent to
\begin{align*}
	\theta_{\max}^2  \geq 	
	MN(1-2\sqrt{\frac{M}{3KZ_y}}).
\end{align*}
This completes the proof.
\end{proof}

From (\ref{T12}), it can be seen that we aim for the inequality on the right side to be as small as possible, which is equivalent to minimizing \( \mathbf{Q}_{Z_x} (\mathbf{w}, \frac{MN^2}{KZ_y}) \). Here, we apply the method of Lagrange multipliers to find the minimum value of \( \mathbf{Q}_{Z_x} (\mathbf{w}, \frac{MN^2}{KZ_y}) \).Additionally, a weight vector can be obtained by minimizing the following function using the Lagrange multiplier method proposed by Levenshtein \cite{levenshtein1999new}:
\begin{align*}
	f(\mathbf{w}, \lambda) = \mathbf{Q}_{Z_x}\left( \mathbf{w}, \frac{MN^2}{KZ_y} \right) - 2 \lambda \left( \sum_{r=0}^{Z_x-1} w_r^2 - 1 \right),
\end{align*}
where \( 2 \leq m \leq Z_x \leq N \). Let \( \gamma = \arccos \left( 1 - \frac{KZ_y}{MN^2} \right) \), \( KZ_y \leq MN^2 \), and \( m \) be an even number such that \( m \gamma \leq \pi + \gamma \). Setting \( \gamma_0 = \frac{\pi - m \gamma + \gamma}{2} \), by relating the quadratic minimization solution of \( f(\mathbf{w}, \lambda) \) to the Chebyshev polynomials of the second kind \cite{levenshtein1999new}, we can obtain the following weight vector:
\begin{align}\label{Wv2}
	w_r = 
	\begin{cases}
		\frac{\sin \frac{\gamma}{2}}{\sin \frac{m \gamma}{2}} \sin \left( \gamma_0 + r \gamma \right), & 0 \leq r \leq m-1, \\
		0, & m \leq r \leq 2N-2.
	\end{cases}
\end{align}

The weight vector  (\ref{Wv2}) is applied to Theorem \ref{T3}, a new improved lower bound of DRCSs is obtained as follows.
\begin{corollary}\label{cor-eqsinw}
For a unimodular $(K,M,N,\theta_{\max},\Pi)$-DRCS set with $K Z_y \leq MN^2$, and any integer $m$ satisfying $1 \leq m<$ $\min \{Z_x+1, \pi/\gamma+1\}$, the lower bound of the aperiodic AF is given by
\begin{align*}
	\theta_{\max }^2 & \geq \frac{\theta_c^2(K-1)+\theta_a^2}{K} \nonumber \\
	& \geq NM-\frac{m-1}{2}-\frac{\sin \frac{m \gamma}{2}-\sin \frac{(m-2) \gamma}{2}}{2(1-\cos \gamma) \sin \frac{m \gamma}{2}}.
\end{align*}
\end{corollary}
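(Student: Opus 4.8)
The plan is to chain the elementary averaging inequality with the cleaner second bound of Theorem~\ref{T3}, and then to reduce everything to a single closed-form evaluation of the Levenshtein quadratic form at the explicit weight vector (\ref{Wv2}). The first inequality, $\theta_{\max}^2 \ge \frac{\theta_c^2(K-1)+\theta_a^2}{K}$, is immediate from the definition $\theta_{\max}=\max\{\theta_a,\theta_c\}$: since $\theta_a^2\le\theta_{\max}^2$ and $\theta_c^2\le\theta_{\max}^2$, summing with multiplicities gives $\theta_a^2+(K-1)\theta_c^2\le K\theta_{\max}^2$. No structure beyond the definitions is needed here.

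For the second inequality I would invoke Theorem~\ref{T3} in the form $\theta_{\max}^2 \ge M\bigl(N-\mathbf{Q}_{Z_x}(\mathbf{w},\tfrac{MN^2}{KZ_y})\bigr)$, so it suffices to evaluate the quadratic form at the chosen weights. The first simplification is that, because (\ref{Wv2}) is supported on $\{0,\dots,m-1\}$ with $m\le Z_x\le N$, every index gap obeys $|s-t|\le m-1<N$, hence $\tau_{s,t,N}=|s-t|$ throughout and $\mathbf{Q}_{Z_x}$ collapses to the classical form $a\sum_s w_s^2+\sum_{s\neq t}|s-t|\,w_s w_t$ with $a=\frac{MN^2}{KZ_y}=\frac{1}{1-\cos\gamma}$, using $\gamma=\arccos(1-\tfrac{KZ_y}{MN^2})$. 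The side hypotheses $KZ_y\le MN^2$ and $1\le m<\min\{Z_x+1,\pi/\gamma+1\}$ are exactly what make $\gamma$ real, force $\gamma_0=\frac{\pi-(m-1)\gamma}{2}>0$ and $\sin\frac{m\gamma}{2}>0$, and place every argument $\gamma_0+r\gamma$ in $(0,\pi)$, so that the weights $w_r$ are nonnegative and admissible.

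The core of the argument is the evaluation of $\mathbf{Q}_{Z_x}(\mathbf{w}^\ast,a)$ via Levenshtein's device. I would start from the discrete-Laplacian identity $|s+1-t|-2|s-t|+|s-1-t|=2\delta_{s,t}$, so that if $P_s:=(\mathbf{Q}\mathbf{w})_s=a w_s+\sum_t|s-t|w_t$, its second difference equals $a(w_{s+1}-2w_s+w_{s-1})+2w_s$. Because $w_s\propto\sin(\gamma_0+s\gamma)$ satisfies the recurrence $w_{s+1}+w_{s-1}=2\cos\gamma\,w_s$ and $a(1-\cos\gamma)=1$, this second difference vanishes in the interior, so $P_s$ is constant on the support; the specific $\gamma_0$ enforces the two boundary conditions and simultaneously normalizes $\sum_s w_s=1$ (via the Dirichlet sine-sum identity, since $\gamma_0+\frac{(m-1)\gamma}{2}=\frac{\pi}{2}$). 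Consequently $\mathbf{Q}(\mathbf{w}^\ast)=\sum_s w_s P_s=P_0=a w_0+\sum_{t=1}^{m-1}t\,w_t$, and substituting this into Theorem~\ref{T3} yields the stated bound.

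The hard part will be the closed-form evaluation of $P_0=a w_0+\sum_{t\ge1}t\,w_t$: one must sum the weighted trigonometric series $\sum_t t\cos(\cdot)$ — obtained by differentiating the Dirichlet-kernel sum — and collapse it together with the $a w_0$ term into the compact expression $\frac{m-1}{2}+\frac{\sin\frac{m\gamma}{2}-\sin\frac{(m-2)\gamma}{2}}{2(1-\cos\gamma)\sin\frac{m\gamma}{2}}$, which requires careful use of product-to-sum identities and telescoping. A secondary subtlety is to justify rigorously that $P_s$ is genuinely constant rather than merely affine across the boundary indices $s=0$ and $s=m-1$, since the Laplacian computation only controls interior second differences; this is precisely where the exact value of $\gamma_0$, and hence the constraint $m<\pi/\gamma+1$, must be used.
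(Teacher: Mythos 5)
Your strategy is sound, and it is in fact the same one the paper implicitly relies on: the paper gives no proof of this corollary at all, merely pointing to Lemma 2 of \cite{liu2013new}, and that lemma rests on exactly the Levenshtein device you describe. Your core computation is also correct: with $a=\frac{MN^2}{KZ_y}=\frac{1}{1-\cos\gamma}$ and the weights of (\ref{Wv2}), the row sums $P_s=aw_s+\sum_t|s-t|w_t$ have vanishing interior second differences (by $w_{s+1}+w_{s-1}=2\cos\gamma\,w_s$ and $a(1-\cos\gamma)=1$), constancy rather than mere affineness follows from the symmetry $w_r=w_{m-1-r}$ forced by $\gamma_0=\frac{\pi-(m-1)\gamma}{2}$, and one indeed gets $\mathbf{Q}_{Z_x}\bigl(\mathbf{w},\tfrac{MN^2}{KZ_y}\bigr)=P_0=\frac{m-1}{2}+\frac{\sin\frac{m\gamma}{2}-\sin\frac{(m-2)\gamma}{2}}{2(1-\cos\gamma)\sin\frac{m\gamma}{2}}$ (this checks out numerically, e.g.\ for $\gamma=\pi/6$, $m=4$). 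Your observation that $\tau_{s,t,N}=|s-t|$ on the support is also right.

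Two genuine mismatches remain between what you prove and what the corollary asserts. First, you invoke the $\theta_{\max}$ form of Theorem \ref{T3}; chaining $\theta_{\max}^2\ge M\bigl(N-\mathbf{Q}_{Z_x}(\mathbf{w},\tfrac{MN^2}{KZ_y})\bigr)$ with the trivial $\theta_{\max}^2\ge\frac{\theta_a^2+(K-1)\theta_c^2}{K}$ does \emph{not} establish the corollary's middle inequality, namely that the average $\frac{\theta_a^2+(K-1)\theta_c^2}{K}$ itself dominates the closed-form expression. For that you must start from the first inequality of Theorem \ref{T3}, drop the term $-\theta_a^2\sum_r w_r^2$, and divide by $K$, which gives $\frac{\theta_a^2+(K-1)\theta_c^2}{K}\ge M\bigl(N-\mathbf{Q}_{Z_x}(\mathbf{w},\tfrac{MN^2}{KZ_y})\bigr)$; only then does your evaluation of the quadratic form finish the job. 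Second, your derivation yields $M\Bigl(N-\frac{m-1}{2}-\frac{\sin\frac{m\gamma}{2}-\sin\frac{(m-2)\gamma}{2}}{2(1-\cos\gamma)\sin\frac{m\gamma}{2}}\Bigr)$, i.e.\ the correction terms carry a factor $M$, whereas the printed corollary subtracts them \emph{without} that factor --- a strictly stronger claim when $M>1$, which your argument does not deliver. You assert the two coincide; they do not. The printed statement appears to be a typo in the paper: Corollary \ref{C5}, which the paper derives from this very corollary by setting $m=\lfloor\pi/\gamma\rfloor+1$, reads $\theta_{\max}^2\ge M\bigl(N-\bigl\lceil\frac{\pi N}{\sqrt{8KZ_y/M}}\bigr\rceil\bigr)$, i.e.\ with the $M$ factor on the correction, exactly matching what your method produces. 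So your proof is essentially the right one, but you should state explicitly that it establishes the $M$-scaled version and flag the discrepancy, rather than claim to recover the literal statement.
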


\begin{proof}
	The proof is similar to Lemma 2 in \cite{liu2013new}, which is ignored here.
\end{proof}

Obviously, when $Z_x>\pi/\gamma$, one can minimize $f(\mathbf{w},\lambda)$ over different values of $m$ by setting $m=\left\lfloor\pi/\gamma\right\rfloor+1$. Thus, Corollary \ref{cor-eqsinw} follows immediately.

\begin{corollary}\label{C5}
For a LAZ satisfying $Z_x>\pi/\gamma$ and $5M \leq K Z_y \leq MN^2$, the lower bound of the aperiodic AF is given by
\begin{align*}
\theta_{\max }^2 & \geq \frac{\theta_c^2(K-1)+\theta_a^2}{K} \nonumber\\
	& \geq M\left(N-\left\lceil\frac{\pi N}{\sqrt{8 K Z_y/M}}\right\rceil\right).
\end{align*}
\end{corollary}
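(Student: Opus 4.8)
The plan is to obtain Corollary \ref{C5} as a direct specialization of Corollary \ref{cor-eqsinw}, by choosing the near-optimal admissible value of the free integer $m$. Under the hypothesis $Z_x > \pi/\gamma$, the constraint $1 \le m < \min\{Z_x+1,\, \pi/\gamma+1\}$ appearing in Corollary \ref{cor-eqsinw} is governed by the angular term, so I would take the largest feasible choice $m = \lfloor \pi/\gamma\rfloor + 1$, exactly as identified in the discussion preceding the statement. Substituting this $m$ into the $m$-dependent bound of Corollary \ref{cor-eqsinw} reduces the problem to simplifying and estimating the resulting right-hand side.

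First I would rewrite the trigonometric correction via the sum-to-product identities $\sin\frac{m\gamma}{2}-\sin\frac{(m-2)\gamma}{2}=2\cos\frac{(m-1)\gamma}{2}\sin\frac{\gamma}{2}$ and $1-\cos\gamma = 2\sin^2\frac{\gamma}{2}$, giving
\begin{align*}
	\frac{\sin\frac{m\gamma}{2}-\sin\frac{(m-2)\gamma}{2}}{2(1-\cos\gamma)\sin\frac{m\gamma}{2}} = \frac{\cos\frac{(m-1)\gamma}{2}}{2\sin\frac{\gamma}{2}\,\sin\frac{m\gamma}{2}}.
\end{align*}
For $m=\lfloor\pi/\gamma\rfloor+1$ one has $\pi < m\gamma < \pi+\gamma$, so $\sin\frac{m\gamma}{2}$ stays bounded below by $\cos\frac{\gamma}{2}$ while $\cos\frac{(m-1)\gamma}{2} < \sin\frac{\gamma}{2}$; hence the correction term is at most $\tfrac{1}{2}\sec\frac{\gamma}{2}$, a bounded constant. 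Next I would use the defining relation $2\sin^2\frac{\gamma}{2}=1-\cos\gamma = \frac{KZ_y}{MN^2}$ together with the elementary inequality $\gamma = \arccos\!\big(1-\tfrac{KZ_y}{MN^2}\big) \ge \sqrt{2KZ_y/M}\,/N$ (from $\cos\gamma \ge 1-\gamma^2/2$). This inequality is precisely where the factor $\sqrt{8}$ enters: it yields $\frac{m-1}{2} = \tfrac12\lfloor \pi/\gamma\rfloor \le \frac{\pi N}{\sqrt{8KZ_y/M}}$.

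Combining these estimates, the right-hand side of Corollary \ref{cor-eqsinw} becomes at least $NM - \frac{\pi N}{\sqrt{8KZ_y/M}} - \tfrac12\sec\frac{\gamma}{2}$, and the final step is to absorb both the constant correction and the ceiling rounding into the single term $M\lceil \pi N/\sqrt{8KZ_y/M}\rceil$, using $M \ge 1$ so that $\frac{\pi N}{\sqrt{8KZ_y/M}} \le \lceil \pi N/\sqrt{8KZ_y/M}\rceil \le M\lceil \pi N/\sqrt{8KZ_y/M}\rceil$. The hypotheses enter here: $KZ_y \le MN^2$ makes $\gamma$ well defined and the weight vector (\ref{Wv2}) legitimate, while $5M \le KZ_y$ keeps $\gamma$ bounded away from $0$ so that $\sec\frac{\gamma}{2}$ and the small-angle estimate are uniformly controlled.

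I expect the main obstacle to be this last consolidation: showing rigorously that the bounded correction $\tfrac12\sec\frac{\gamma}{2}$ together with the gap $\lceil \pi N/\sqrt{8KZ_y/M}\rceil - \frac{m-1}{2}$ (arising from both the floor in $m$ and the discrepancy between $\gamma$ and its small-angle surrogate) is dominated by the slack $M\lceil \pi N/\sqrt{8KZ_y/M}\rceil - \frac{\pi N}{\sqrt{8KZ_y/M}}$. Verifying this cleanly requires monotonicity estimates on $\arccos$ near $1$, and it is the one place where the precise numerical threshold $5M \le KZ_y$ is genuinely needed rather than merely $M \le KZ_y$.
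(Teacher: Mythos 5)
Your route is the paper's route: the paper's entire proof of Corollary \ref{C5} is to set $m=\lfloor\pi/\gamma\rfloor+1$ in Corollary \ref{cor-eqsinw} and repeat the argument of Levenshtein's Corollary 4, and your trigonometric work fills this in correctly --- the sum-to-product simplification of the correction term, the bounds $\sin\frac{m\gamma}{2}\geq\cos\frac{\gamma}{2}$ and $\cos\frac{(m-1)\gamma}{2}\leq\sin\frac{\gamma}{2}$ (so the correction is at most $\frac{1}{2}\sec\frac{\gamma}{2}$), and the estimate $\frac{m-1}{2}\leq\frac{\pi}{2\gamma}\leq\frac{\pi N}{\sqrt{8KZ_y/M}}$ via $\gamma\geq 2\sin\frac{\gamma}{2}=\sqrt{2KZ_y/M}\,/N$ are all sound.

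The gap is exactly the step you flag: the consolidation. The chain $x\leq\lceil x\rceil\leq M\lceil x\rceil$ (with $x=\frac{\pi N}{\sqrt{8KZ_y/M}}$) never absorbs the correction term, and ``using $M\geq 1$'' cannot be made to work: for $M=1$ you would need $\frac{m-1}{2}+\frac12\sec\frac{\gamma}{2}\leq\lceil x\rceil$, which fails whenever $x$ is an integer or has a large fractional part, since your bound on the correction exceeds $\frac12$; rescuing that case requires Levenshtein's finer analysis, in which the correction is partially cancelled against the deficit $\frac{\pi}{2\gamma}-\frac{m-1}{2}$ rather than bounded crudely. The clean repair inside your own argument is to invoke the paper's standing assumption that a DRCS has $M\geq2$ constituent sequences: then $M\lceil x\rceil-x\geq2\lceil x\rceil-x\geq\lceil x\rceil\geq1$, whereas the correction is at most $\frac12\sec\frac{\pi}{4}=\frac{\sqrt2}{2}<1$, because $KZ_y\leq MN^2$ forces $\cos\gamma\geq0$, i.e.\ $\gamma\leq\frac{\pi}{2}$. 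This also shows you have the roles of the two hypotheses reversed: the upper bound $KZ_y\leq MN^2$ is what controls $\sec\frac{\gamma}{2}$ (as $\gamma\to0$ the secant tends to $1$, so small $\gamma$ is harmless), while $5M\leq KZ_y$ is a feasibility condition --- it gives $\gamma\geq\sqrt{10}/N>\pi/N$, hence $\pi/\gamma<N$, so the hypothesis $Z_x>\pi/\gamma$ is compatible with $Z_x\leq N$; it is the analogue of Levenshtein's condition $K\geq5$, not what rescues the final inequality.
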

\begin{proof}
	Let $m=\left\lfloor\pi/\gamma\right\rfloor+1$ for Corollary \ref{cor-eqsinw}, then the proof is similar to Corollary 4 in \cite{levenshtein1999new}, which is ignored here.
\end{proof}

\section{Asymptotically Optimal Construction of DRCSs} \label{5}
In this section, we propose a new aperiodic DRCSs which are asymptotically optimal with respect to proposed bound in  (\ref{C1eq}). 

\begin{construction}\label{c1}
For any positive integer $ N \geq 2$, let $\mathbf{A}=[a_{i,j}]$ be a quasi-Florentine rectangle of size $K\times (N-1)$ over $\mathbb{Z}_N$. Additionally, let $\pi_p$ denote the $p$-th row of $\mathbf{A}$ and $\pi_p(n)$ be $n$-th element of $\pi_p$. Finally, let $\mathbf{B}$ be a Butson-type Hadamard matrix of order $N$ over the alphabet $\mathbb{Z}_r$, given by
\begin{align*}
	 \mathbf{B}=\left[\begin{array}{c}
		\mathbf{b}_0 \\
		\mathbf{b}_1 \\
		\vdots \\
		\mathbf{b}_{N-1}
	\end{array}\right]=\left[\begin{array}{cccc}
		\omega_r^{b_{0,0}} & \omega_r^{b_{0,1}} & \cdots & \omega_r^{b_{0, N-1}} \\
		\omega_r^{b_{1,0}} & \omega_r^{b_{1,1}} & \cdots & \omega_r^{b_{1, N-1}} \\
		\vdots & \vdots & \ddots & \vdots \\
		\omega_r^{b_{N-1,0}} & \omega_r^{b_{N-1,1}} & \cdots & \omega_r^{b_{N-1, N-1}}
	\end{array}\right].
\end{align*}
Define a DRCS set $\mathcal{C}=\{\mathbf{C}^{(0)},\mathbf{C}^{(1)},\cdots,\mathbf{C}^{(K-1)}\}$, where $\mathbf{C}^{(k)} =\{\mathbf{c}_0^{(k)}, \mathbf{c}_1^{(k)}, \cdots, \mathbf{c}_{N-1}^{(k)}\}$, $\mathbf{c}_m^{(k)} = (c_{m, 0}^{(k)}, c_{m, 1}^{(k)}, \cdots, c_{m, N-2}^{(k)})$, and
\begin{align*}
  c_{m, n}^{(k)}=\omega_r^{b_{\pi_k(n),m}}
\end{align*}
for $0\leq k<K$, $0 \leq m<N$, $0 \leq n<N-1$.
\end{construction}

\begin{theorem}\label{thc1}
The sequence set $\mathcal{C}$ constructed using Construction \ref{c1} is an aperiodic DRCS set with parameters  $\left(K, N, N-1, N, \Pi\right)$-DRCS, where $\Pi=(-N+1,N-1)\times (-N+1, N-1)$. 
\end{theorem}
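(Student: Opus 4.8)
The plan is to compute the aperiodic cross ambiguity sum $AF_{\mathbf{C}^{(k_1)},\mathbf{C}^{(k_2)}}(\tau,v)$ directly from Construction \ref{c1} and to decouple the contributions of the Butson-type Hadamard matrix $\mathbf{B}$ from those of the quasi-Florentine rectangle $\mathbf{A}$. Substituting $c_{m,n}^{(k)}=\omega_r^{b_{\pi_k(n),m}}$ into the ambiguity sum (noting that the constituent sequences have length $N-1$, so the Doppler phase is $\omega_{N-1}^{vt}$), the sum over the $M=N$ elementary sequences separates, and for $\tau\ge 0$ one gets
\[
AF_{\mathbf{C}^{(k_1)},\mathbf{C}^{(k_2)}}(\tau,v)=\sum_{t=0}^{N-2-\tau}\omega_{N-1}^{vt}\sum_{m=0}^{N-1}\omega_r^{\,b_{\pi_{k_1}(t),m}-b_{\pi_{k_2}(t+\tau),m}}.
\]
The inner $m$-sum is precisely the $(\pi_{k_1}(t),\pi_{k_2}(t+\tau))$ entry of $\mathbf{B}\mathbf{B}^H$, so the defining relation $\mathbf{B}\mathbf{B}^H=N\mathbf{I}$ collapses it to $N\,\delta_{\pi_{k_1}(t),\pi_{k_2}(t+\tau)}$. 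This is the central reduction: the whole AF sum becomes $N$ times a Doppler-weighted count of the coincidences $\pi_{k_1}(t)=\pi_{k_2}(t+\tau)$ between two rows of the quasi-Florentine rectangle.

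With this reduction I would treat the auto and cross cases separately. For the auto case $k_1=k_2=k$, condition \textbf{C1} makes each row an injection of $\{0,\dots,N-2\}$ into $\mathbb{Z}_N$, so for $\tau\neq 0$ no coincidence $\pi_k(t)=\pi_k(t+\tau)$ occurs and the sum vanishes; for $\tau=0$ the surviving factor is the full geometric sum $\sum_{t=0}^{N-2}\omega_{N-1}^{vt}=0$ whenever $0<|v|\le N-2<N-1$, which is exactly Lemma \ref{lemma}. Hence the auto AF is zero throughout $\Pi\setminus\{(0,0)\}$ and $\theta_a(\mathcal{C})=0$. For the cross case $k_1\neq k_2$, Lemma \ref{QFRlemma} guarantees that $\pi_{k_1}(t)=\pi_{k_2}(t+\tau)$ has at most one solution $t$ for each shift, so at most one nonzero term survives and $|AF_{\mathbf{C}^{(k_1)},\mathbf{C}^{(k_2)}}(\tau,v)|\le N$, giving $\theta_c(\mathcal{C})\le N$. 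The negative-shift case $\tau<0$ reduces to the positive one after the substitution $s=t+\tau$ and interchanging the roles of $k_1,k_2$ in Lemma \ref{QFRlemma}.

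It remains to verify that $N$ is attained, so that $\theta_{\max}(\mathcal{C})=N$ exactly rather than merely $\le N$. For this I would use a short counting argument: any two distinct rows of $\mathbf{A}$, each carrying $N-1$ of the $N$ symbols of $\mathbb{Z}_N$, share at least $N-2\ge 1$ symbols when $N\ge 3$; a common symbol sitting at position $p_{k_1}$ in row $k_1$ and $p_{k_2}$ in row $k_2$ produces a coincidence at the shift $\tau=p_{k_2}-p_{k_1}\in(-N+1,N-1)$, which by Lemma \ref{QFRlemma} is the unique one there, and thus $|AF_{\mathbf{C}^{(k_1)},\mathbf{C}^{(k_2)}}(\tau,v)|=N$. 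Combining $\theta_a=0$ with $\theta_c=N$ yields $\theta_{\max}(\mathcal{C})=N$, and reading off $K$ sets, $M=N$ elementary sequences, length $N-1$, and $\Pi=(-N+1,N-1)\times(-N+1,N-1)$ identifies all parameters. I expect the main obstacle to be the first paragraph, namely the index bookkeeping needed to recognize the $m$-sum cleanly as a row inner product of $\mathbf{B}$ (keeping the relevant period $N-1$ rather than $N$, and confirming the coincidence count is governed by the quasi-Florentine property for both signs of $\tau$); once the reduction to $N\,\delta_{\pi_{k_1}(t),\pi_{k_2}(t+\tau)}$ is in place, the two lemmas finish the argument almost immediately.
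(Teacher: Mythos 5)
Your proposal follows essentially the same route as the paper's proof: substitute the construction into the AF sum, use the Butson property $\mathbf{B}\mathbf{B}^H = N\mathbf{I}$ to collapse the inner $m$-sum to $N\,\delta_{\pi_{k_1}(t),\pi_{k_2}(t+\tau)}$, then invoke condition C1 together with Lemma \ref{lemma} for the auto case and Lemma \ref{QFRlemma} for the cross case to get $\theta_a = 0$ and $\theta_c \le N$. If anything, your version is slightly more careful than the paper's, which attributes the auto-case vanishing to Lemma \ref{QFRlemma} rather than to C1, asserts $|AF_{\mathbf{C}^{(k_1)},\mathbf{C}^{(k_2)}}(\tau,v)| = N$ for all shifts without noting the zero-coincidence case, and omits both your explicit attainment argument (that two rows must share a symbol, so $\theta_{\max}=N$ exactly) and the treatment of negative $\tau$.
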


\begin{proof}
According to the definition of DRCSs, we divide the proof of Theorem \ref{thc1} into two cases: auto-AF and cross-AF.

Case 1 (auto-AF): For any $0 \leq k<K$ and $(\tau,v)\ne(0,0)$, we have
\begin{align*}
\begin{split}
AF_{\mathbf{C}^{(k)}}(\tau, v) & =\sum_{m=0}^{N-1} \sum_{n=0}^{N-2-\tau} 
c_{m, n}^{(k)} (c_{m, n+\tau}^{(k)})^{*}\omega_{N-1}^{nv}\\ 
&=\sum_{m=0}^{N-1}\sum_{n=0}^{N-2-\tau}
\omega_r^{b_{\pi_k(n),m}}\omega_r^{-b_{\pi_k(n+\tau),m}}\omega_{N-1}^{nv}\\
&=\sum_{n=0}^{N-2-\tau}\omega_{N-1}^{nv}\sum_{m=0}^{N-1}\omega_r^{b_{\pi_k(n),m}-b_{\pi_k(n+\tau),m}}.
\end{split}
\end{align*}
Based on Lemma \ref{QFRlemma}, $|AF_{\mathbf{C}^{(k)}}(\tau, v)|=0$ is obvious.

Case 2 (cross-AF): For any $0 \leq k_1\neq k_2<K$, we have
\begin{align*}
\begin{split}
A F_{\mathbf{C}^{(k_1)},\mathbf{C}^{(k_2)}}(\tau, v) & 
=\sum_{m=0}^{N-1} \sum_{n=0}^{N-2-\tau} 
c_{m, n}^{(k_1)} (c_{m, n+\tau}^{(k_2)})^{*}\omega_{N-1}^{nv}\\ 
&=\sum_{m=0}^{N-1}\sum_{n=0}^{N-2-\tau}
\omega_r^{b_{\pi_{k_1}(n),m}}\omega_r^{-b_{\pi_{k_2}(n+\tau),m}}\omega_{N-1}^{nv}\\
&=\sum_{n=0}^{N-2-\tau}\omega_{N-1}^{nv}\sum_{m=0}^{N-1}\omega_r^{b_{\pi_{k_1}(n),m}-b_{\pi_{k_2}(n+\tau),m}}.
\end{split}
\end{align*}
According to Lemma \ref{QFRlemma}, we have $\pi_{k_1}(n)=\pi_{k_2}(n+\tau)$ for $0 \leq n \leq n+\tau<N-1$ and $0 \leq k_1 \neq k_2<K$, which has at most one solution. If it does not have a solution, then
\begin{align*}
	A F_{\mathbf{C}^{(k_1)},\mathbf{C}^{(k_2)}}(\tau, v)=0.
\end{align*}
If it has one solution, denoted as $n'$. The vectors $\mathbf{b}_{\pi_{k_1}(n)}$ and $\mathbf{b}_{\pi_{k_2}(n+\tau)}$ are two distinct rows of a Butson-type Hadamard matrix, and they are orthogonal to each other. Therefore, we have
\begin{align*}
	A F_{\mathbf{C}^{(k_1)},\mathbf{C}^{(k_2)}}(\tau, v)
	&=N\omega_{N-1}^{n'v}+\sum_{n=0,n\neq n'}^{N-2-\tau}\omega_{N-1}^{nv}
	\sum_{m=0}^{N-1}\omega_r^{b_{\pi_{k_1}(n),m}-b_{\pi_{k_2}(n+\tau),m}} \nonumber \\
	&=N\omega_{N-1}^{n'v}.
\end{align*}
Namely, $|AF_{\mathbf{C}^{(k_1)},\mathbf{C}^{(k_2)}}(\tau, v)|=N$ for all $k_1 \neq k_2$ and $0 \leq \tau, v<N-1$. This completes the proof.
\end{proof}

\begin{remark}
Note that Theorem \ref{thc1} depends on the availability of quasi-Florentine rectangles. Based on Theorem \ref{T1} and Corollary \ref{cor1}, there exist quasi-Florentine rectangles of size $p^n\times (p^n-1)$ and $p^n\times p^n$, where $p$ is a prime number and $n\ge1$. More details on quasi-Florentine rectangles can be found in \cite{Avik2024}.
\end{remark}

In the following, we show an example to illustrate the proposed construction.

\begin{example}\label{ex1}
	Let $p=3$, $n=2$, $N=p^2=9$ and $f(x)=x^2+2x+2$ be a primitive polynomial, also let $\alpha$ be a primitive element of $\mathbb{F}_{9}$ with $\alpha^2+2\alpha+2=0$, then $\mathbb{F}_{9}=\{0,\alpha^0,\alpha^1,\alpha^2,\alpha^3,\cdots,\alpha^{7}\}$. Based on the Corollary \ref{cor1}, the quasi-Florentine rectangle of size $9 \times 9$ over $\mathbb{Z}_{10}$ denoted by
	\begin{equation*}
		\mathbf{A}={\left[\begin{array}{ccccccccc} 
				1 & 3 & 4 & 7 & 2 & 6 & 8 & 5 & 9 \\
				2 & 4 & 5 & 8 & 0 & 7 & 6 & 3 & 9 \\
				4 & 6 & 7 & 1 & 5 & 0 & 2 & 8 & 9 \\
				5 & 7 & 8 & 2 & 3 & 1 & 0 & 6 & 9 \\
				8 & 1 & 2 & 5 & 6 & 4 & 3 & 0 & 9 \\ 
				0 & 5 & 3 & 6 & 1 & 8 & 7 & 4 & 9 \\
				7 & 0 & 1 & 4 & 8 & 3 & 5 & 2 & 9 \\
				6 & 2 & 0 & 3 & 7 & 5 & 4 & 1 & 9 \\
				3 & 8 & 6 & 0 & 4 & 2 & 1 & 7 & 9 \\
			\end{array}\right].}
	\end{equation*}
Let $\mathbf{B}$ be $BH(10,5)$ matrix of order $10$ over the alphabet $\mathbb{Z}_5$, as following:
\begin{equation*}
	\mathbf{B}=\left[\begin{array}{rrrrrrrrrr}
0 & 0 & 0 & 0 & 0 & 0 & 0 & 0 & 0 & 0 \\
0 & 1 & 2 & 3 & 4 & 4 & 0 & 1 & 2 & 3 \\
0 & 2 & 4 & 1 & 3 & 1 & 3 & 0 & 2 & 4 \\
0 & 3 & 1 & 4 & 2 & 1 & 4 & 2 & 0 & 3 \\
0 & 4 & 3 & 2 & 1 & 4 & 3 & 2 & 1 & 0 \\
0 & 3 & 2 & 2 & 3 & 0 & 1 & 4 & 4 & 1 \\
0 & 2 & 0 & 4 & 4 & 3 & 2 & 3 & 1 & 1 \\
0 & 1 & 3 & 1 & 0 & 2 & 4 & 3 & 4 & 2 \\
0 & 0 & 1 & 3 & 1 & 2 & 2 & 4 & 3 & 4 \\
0 & 4 & 4 & 0 & 2 & 3 & 1 & 1 & 3 & 2
	\end{array}\right].
\end{equation*}
Note that each element in $\mathbf{B}$ represents a power of $\omega_5$.

 A glimpse of the aperiodic auto-AF and cross-AF of the sequences in $\mathcal{C}$ can be seen in Fig. \ref{fig:ex1}. According to Theorem \ref{thc1}, we can obtain an aperiodic $(9,10,9,10,\Pi)$-DRCS set over an alphabet of size 5, where $\Pi=(-9,9)\times (-9,9)$. The DRCSs $\mathbf{C}^{(0)}$, $\mathbf{C}^{(1)}$,$\mathbf{C}^{(2)}$, and $\mathbf{C}^{(3)}$ in Example \ref{ex1}, are
 given in Table \ref{DRCS_sequence}. Note that each element represents a power
 of $\omega_5$.
  
  \begin{figure}[htp]
  	\centering
  	\includegraphics[width=1\linewidth]{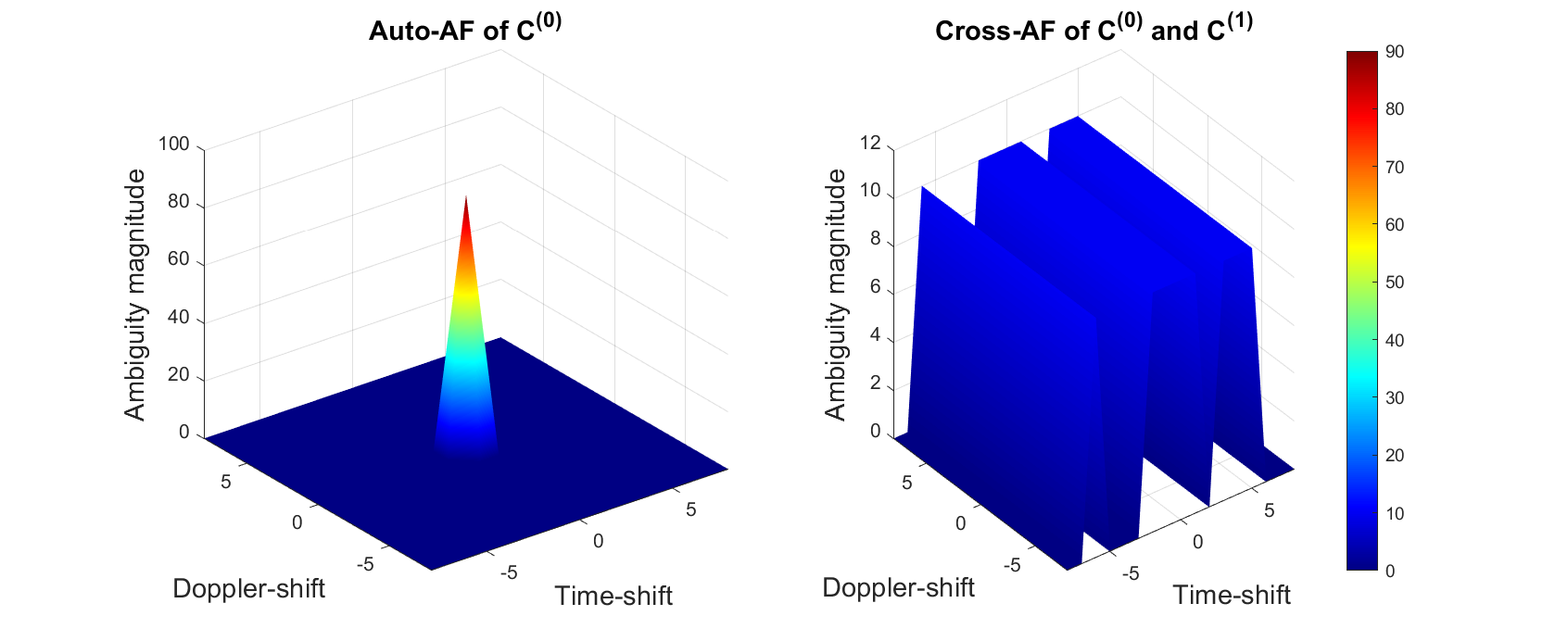}
  	\caption{A glimpse of the aperiodic auto-AF and cross-AF of the sequence set $\mathcal{C}$ in Example \ref{ex1}.}
  	\label{fig:ex1}
  \end{figure}
  
 \begin{table}[htp]
 	\begin{center}
 		\caption{DRCSs in Example \ref{ex1}.}   
 			\label{DRCS_sequence}
 		\begin{tabular}{|c|c|}
 			\hline $\mathbf{C}^{(0)}$ & $\mathbf{C}^{(1)}$\\
 			\hline \begin{tabular}{cccccccccc}
0 & 1 & 2 & 3 & 4 & 4 & 0 & 1 & 2 & 3 \\
0 & 3 & 1 & 4 & 2 & 1 & 4 & 2 & 0 & 3 \\
0 & 4 & 3 & 2 & 1 & 4 & 3 & 2 & 1 & 0 \\
0 & 1 & 3 & 1 & 0 & 2 & 4 & 3 & 4 & 2 \\
0 & 2 & 4 & 1 & 3 & 1 & 3 & 0 & 2 & 4 \\
0 & 2 & 0 & 4 & 4 & 3 & 2 & 3 & 1 & 1 \\
0 & 0 & 1 & 3 & 1 & 2 & 2 & 4 & 3 & 4 \\
0 & 3 & 2 & 2 & 3 & 0 & 1 & 4 & 4 & 1 \\
0 & 4 & 4 & 0 & 2 & 3 & 1 & 1 & 3 & 2
 			\end{tabular}
 			& \begin{tabular}{cccccccccc}
 		0 & 2 & 4 & 1 & 3 & 1 & 3 & 0 & 2 & 4 \\
 		0 & 4 & 3 & 2 & 1 & 4 & 3 & 2 & 1 & 0 \\
 		0 & 3 & 2 & 2 & 3 & 0 & 1 & 4 & 4 & 1 \\
 		0 & 0 & 1 & 3 & 1 & 0 & 2 & 4 & 3 & 4 \\
 		0 & 0 & 0 & 0 & 0 & 0 & 0 & 0 & 0 & 0 \\
 		0 & 1 & 3 & 0 & 2 & 2 & 4 & 3 & 0 & 2 \\
 		0 & 2 & 0 & 4 & 3 & 3 & 2 & 3 & 1 & 1 \\
 		0 & 3 & 1 & 2 & 1 & 4 & 4 & 2 & 0 & 3 \\
 		0 & 4 & 4 & 0 & 2 & 3 & 1 & 1 & 3 & 2
 			\end{tabular} \\ 	\hline
 		  $\mathbf{C}^{(2)}$ & $\mathbf{C}^{(3)}$\\ 	\hline
 		 \begin{tabular}{cccccccccc} 
0 & 4 & 3 & 2 & 1 & 4 & 3 & 2 & 1 & 0 \\
0 & 2 & 0 & 4 & 4 & 3 & 2 & 3 & 1 & 1 \\
0 & 1 & 3 & 1 & 0 & 2 & 4 & 3 & 4 & 2 \\
0 & 1 & 2 & 3 & 1 & 0 & 4 & 4 & 2 & 3 \\
0 & 3 & 2 & 2 & 3 & 0 & 1 & 4 & 4 & 1 \\
0 & 0 & 0 & 0 & 0 & 0 & 0 & 0 & 0 & 0 \\
0 & 2 & 4 & 0 & 1 & 0 & 3 & 2 & 1 & 3 \\
0 & 0 & 1 & 3 & 1 & 2 & 2 & 4 & 3 & 4 \\
0 & 4 & 4 & 0 & 2 & 3 & 1 & 1 & 3 & 2
 			\end{tabular} &   \begin{tabular}{cccccccccc} 
0 & 3 & 2 & 2 & 3 & 0 & 1 & 4 & 4 & 1 \\
0 & 1 & 3 & 1 & 0 & 2 & 4 & 3 & 4 & 2 \\
0 & 0 & 1 & 3 & 1 & 2 & 2 & 4 & 3 & 4 \\
0 & 2 & 4 & 1 & 3 & 1 & 3 & 0 & 2 & 4 \\
0 & 3 & 1 & 4 & 2 & 1 & 4 & 2 & 0 & 3 \\
0 & 1 & 2 & 3 & 4 & 4 & 0 & 1 & 2 & 3 \\
0 & 0 & 0 & 0 & 0 & 0 & 0 & 0 & 0 & 0 \\
0 & 2 & 0 & 4 & 4 & 3 & 2 & 3 & 1 & 1 \\
0 & 4 & 4 & 0 & 2 & 3 & 1 & 1 & 3 & 2
 		\end{tabular} \\
 			\hline
 		\end{tabular}
 	\end{center}
 \end{table}
\end{example}

\begin{corollary}\label{cFR}
	The construction proposed in Theorem \ref{thc1} also works when replacing the quasi-Florentine rectangles with Florentine rectangles. By employing a Butson-type Hadamard matrix \( BH(N, r) \) and a Florentine rectangle of order \( K \times N \), we can construct a \( (K, N, N, N, \Pi) \)-DRCS, where \( \Pi = (-N, N) \times (-N, N) \). 
\end{corollary}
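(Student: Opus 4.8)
The plan is to mirror the proof of Theorem \ref{thc1} almost verbatim, replacing the length-$(N-1)$ quasi-Florentine rows by the length-$N$ rows of a Florentine rectangle of order $K\times N$, so that the elementary sequences $c_{m,n}^{(k)}=\omega_r^{b_{\pi_k(n),m}}$ now run over $0\le n<N$ and have full length $N$, with Doppler modulus $\omega_N$ in place of $\omega_{N-1}$. The only structural input that changes is the combinatorial shift property of the rows, so the first task is to record the Florentine analogue of Lemma \ref{QFRlemma}: for two distinct rows $\pi_{k_1}\ne\pi_{k_2}$ and any shift $\tau$, the equation $\pi_{k_1}(n)=\pi_{k_2}(n+\tau)$ has at most one solution $n$. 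I would derive this directly from the coincidence-free property in the definition of a Florentine rectangle: if $n_1<n_2$ were two solutions, then setting $a=\pi_{k_1}(n_1)$ and $b=\pi_{k_1}(n_2)$ (distinct, since each row is a permutation), the symbol $b$ would appear exactly $m=n_2-n_1$ steps to the right of $a$ in both row $k_1$ and row $k_2$, which is forbidden whenever $k_1\ne k_2$.

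With that lemma in hand, the auto-AF computation is in fact cleaner than in Theorem \ref{thc1}. Expanding $AF_{\mathbf{C}^{(k)}}(\tau,v)$ and collapsing the sum over $m$ into an inner product of two rows of $\mathbf{B}$, I would argue: for $\tau\ne0$ each row of the Florentine rectangle is a genuine permutation, so $\pi_k(n)\ne\pi_k(n+\tau)$, and the inner product of two distinct Butson-type Hadamard rows vanishes by $\mathbf{B}\mathbf{B}^H=N\mathbf{I}$; hence $AF_{\mathbf{C}^{(k)}}(\tau,v)=0$. For $\tau=0$ and $v\ne0$ the inner sum over $m$ is identically $N$, leaving $N\sum_{n=0}^{N-1}\omega_N^{nv}$, a full-period geometric sum of length $N$ that vanishes (equivalently, invoke Lemma \ref{lemma}).

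For the cross-AF with $k_1\ne k_2$, I would apply the Florentine analogue above: either $\pi_{k_1}(n)=\pi_{k_2}(n+\tau)$ has no solution, in which case every inner product is between distinct Hadamard rows and the entire sum is zero, or it has a unique solution $n'$, which contributes a single self-inner-product $N\omega_N^{n'v}$ while all remaining terms cancel by orthogonality; thus $|AF_{\mathbf{C}^{(k_1)},\mathbf{C}^{(k_2)}}(\tau,v)|=N$. Collecting the two cases yields $\theta_a=0$ and $\theta_c=N$, so $\theta_{\max}=N$ over the full window $\Pi=(-N,N)\times(-N,N)$, while the set size and the sequence length read off directly as the claimed $(K,N,N,N,\Pi)$-DRCS.

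The only genuinely new ingredient is the Florentine shift lemma, and even that is a one-line consequence of the rectangle's defining property; the main point requiring care is the index bookkeeping once the rows have length $N$ rather than $N-1$, ensuring that the Doppler modulus becomes $\omega_N$ and that the $\tau=0$ geometric sum runs over a complete period of length $N$. I do not expect any substantive obstacle beyond confirming that the orthogonality and coincidence arguments survive the change from $N-1$ to $N$.
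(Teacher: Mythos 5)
Your proposal is correct and takes essentially the same route as the paper, which omits the proof as ``similar to the proof of Theorem \ref{thc1}''; you simply carry out that adaptation explicitly, including the Florentine analogue of Lemma \ref{QFRlemma} and the change from length $N-1$ with modulus $\omega_{N-1}$ to length $N$ with modulus $\omega_N$. Both the coincidence argument for distinct rows and the handling of the $\tau=0$, $v\neq 0$ case via the full-period geometric sum are exactly what the paper's intended argument requires.
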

\begin{proof}
	The proof is similar to the proof of Theorem \ref{thc1}, hence omitted.
\end{proof}
For details on the construction of Florentine rectangles, please refer to \cite{avik2021asymptotically} and the references therein.

\begin{remark}\label{remark5}
	Notably, when the Butson-type Hadamard matrix is selected as the DFT matrix, the proposed \( (K, N, N, N, \Pi) \)-DRCS reduces to the construction by Shen \textit{et al.} in \cite{shen2024}, where \( \Pi = (-N, N) \times (-N, N) \). Furthermore, in the construction by Shen \textit{et al.} \cite{shen2024}, the alphabet size is \( N \), whereas in Construction \ref{c1}, the alphabet size is \( r \geq 2\). Thus, our proposed DRCSs feature a smaller alphabet size.
\end{remark}

\begin{example}\label{ex2}
Consider \( BH(16,2) \) and a Florentine rectangle with \( K = 16 \). By applying Construction \ref{c1} and Corollary \ref{cFR}, we obtain a \( (16, 16, 16, 16, \Pi) \)-DRCS over \( \mathbb{Z}_2 \), where \( \Pi = (-16,16) \times (-16,16) \). A visualization of the AF is presented in Fig. \ref{fig:ex3}, while the corresponding sequence sets \( \mathbf{C}^{(0)} \) and \( \mathbf{C}^{(1)} \) are listed in Table \ref{DRCS_sequence1}. Note that each element represents a power
of $\omega_2$.

\begin{figure}[htp]
	\centering
	\includegraphics[width=1\linewidth]{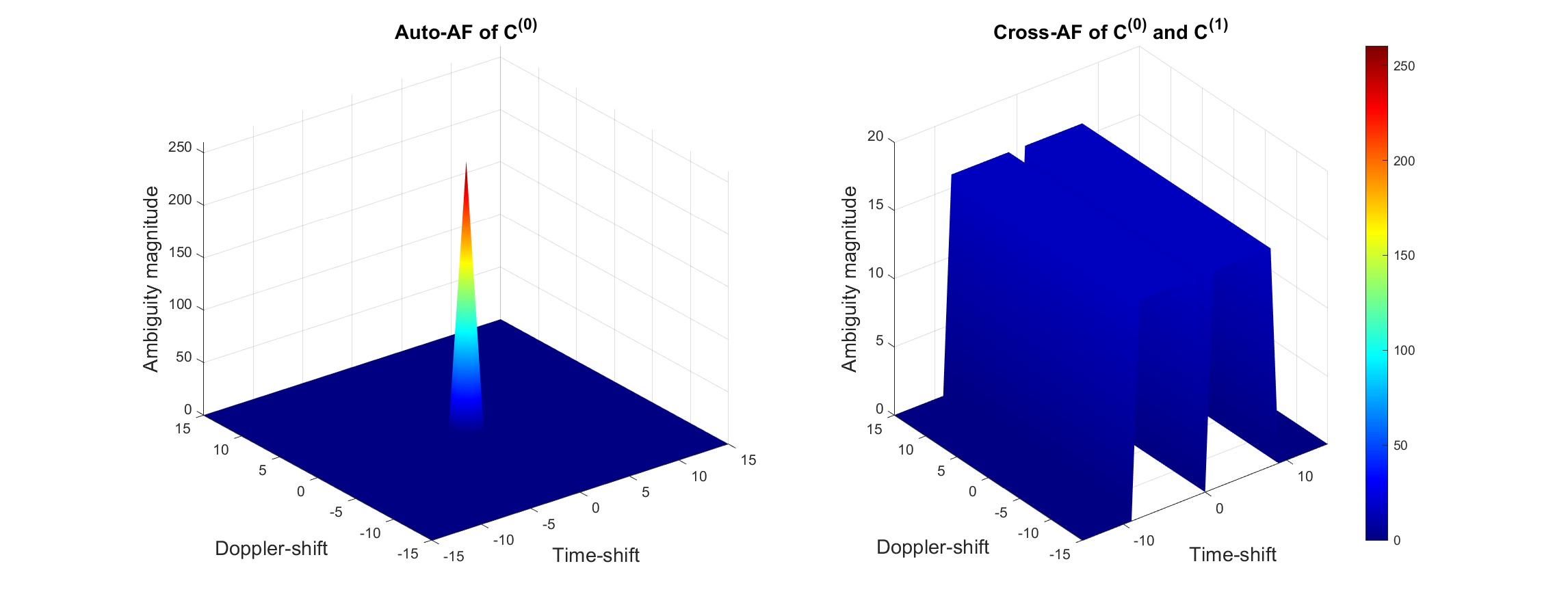}
	\caption{A glimpse of the aperiodic auto-AF and cross-AF of the sequence set $\mathcal{C}$ in Example \ref{ex2}.}
	\label{fig:ex3}
\end{figure}
 \begin{table}[htp]
 	
	\begin{center}
		\caption{DRCSs in Example \ref{ex2}.}   
		\label{DRCS_sequence1}
		\resizebox{\textwidth}{!}{
		\begin{tabular}{|c|c|}
			\hline  $\mathbf{C}^{(0)}$ &	$\mathbf{C}^{(1)}$ \\
			\hline    	  \begin{tabular}{cccccccccccccccc}
				0 & 0 & 0 & 0 & 0 & 0 & 0 & 0 & 0 & 0 & 0 & 0 & 0 & 0 & 0 & 0 \\
				0 & 1 & 0 & 1 & 0 & 1 & 0 & 1 & 0 & 1 & 0 & 1 & 0 & 1 & 0 & 1 \\
				0 & 0 & 1 & 1 & 0 & 0 & 1 & 1 & 0 & 0 & 1 & 1 & 0 & 0 & 1 & 1 \\
				0 & 1 & 1 & 0 & 0 & 1 & 1 & 0 & 0 & 1 & 1 & 0 & 0 & 1 & 1 & 0 \\
				0 & 0 & 0 & 0 & 1 & 1 & 1 & 1 & 0 & 0 & 0 & 0 & 1 & 1 & 1 & 1 \\
				0 & 1 & 0 & 1 & 1 & 0 & 1 & 0 & 0 & 1 & 0 & 1 & 1 & 0 & 1 & 0 \\
				0 & 0 & 1 & 1 & 1 & 1 & 0 & 0 & 0 & 0 & 1 & 1 & 1 & 1 & 0 & 0 \\
				0 & 1 & 1 & 0 & 1 & 0 & 0 & 1 & 0 & 1 & 1 & 0 & 1 & 0 & 0 & 1 \\
				0 & 0 & 0 & 0 & 0 & 0 & 0 & 0 & 1 & 1 & 1 & 1 & 1 & 1 & 1 & 1 \\
				0 & 1 & 0 & 1 & 0 & 1 & 0 & 1 & 1 & 0 & 1 & 0 & 1 & 0 & 1 & 0 \\
				0 & 0 & 1 & 1 & 0 & 0 & 1 & 1 & 1 & 1 & 0 & 0 & 1 & 1 & 0 & 0 \\
				0 & 1 & 1 & 0 & 0 & 1 & 1 & 0 & 1 & 0 & 0 & 1 & 1 & 0 & 0 & 1 \\
				0 & 0 & 0 & 0 & 1 & 1 & 1 & 1 & 1 & 1 & 1 & 1 & 0 & 0 & 0 & 0 \\
				0 & 1 & 0 & 1 & 1 & 0 & 1 & 0 & 1 & 0 & 1 & 0 & 0 & 1 & 0 & 1 \\
				0 & 0 & 1 & 1 & 1 & 1 & 0 & 0 & 1 & 1 & 0 & 0 & 0 & 0 & 1 & 1 \\
				0 & 1 & 1 & 0 & 1 & 0 & 0 & 1 & 1 & 0 & 0 & 1 & 0 & 1 & 1 & 0 
			\end{tabular} & \begin{tabular}{cccccccccccccccc}
				0 & 0 & 0 & 0 & 0 & 0 & 0 & 0 & 0 & 0 & 0 & 0 & 0 & 0 & 0 & 0 \\
				1 & 1 & 1 & 1 & 1 & 1 & 1 & 1 & 0 & 0 & 0 & 0 & 0 & 0 & 0 & 0 \\
				0 & 1 & 0 & 1 & 0 & 1 & 0 & 1 & 0 & 1 & 0 & 1 & 0 & 1 & 0 & 1 \\
				1 & 0 & 1 & 0 & 1 & 0 & 1 & 0 & 0 & 1 & 0 & 1 & 0 & 1 & 0 & 1 \\
				0 & 0 & 1 & 1 & 0 & 0 & 1 & 1 & 0 & 0 & 1 & 1 & 0 & 0 & 1 & 1 \\
				1 & 1 & 0 & 0 & 1 & 1 & 0 & 0 & 0 & 0 & 1 & 1 & 0 & 0 & 1 & 1 \\
				0 & 1 & 1 & 0 & 0 & 1 & 1 & 0 & 0 & 1 & 1 & 0 & 0 & 1 & 1 & 0 \\
				1 & 0 & 0 & 1 & 1 & 0 & 0 & 1 & 0 & 1 & 1 & 0 & 0 & 1 & 1 & 0 \\
				0 & 0 & 0 & 0 & 1 & 1 & 1 & 1 & 0 & 0 & 0 & 0 & 1 & 1 & 1 & 1 \\
				1 & 1 & 1 & 1 & 0 & 0 & 0 & 0 & 0 & 0 & 0 & 0 & 1 & 1 & 1 & 1 \\
				0 & 1 & 0 & 1 & 1 & 0 & 1 & 0 & 0 & 1 & 0 & 1 & 1 & 0 & 1 & 0 \\
				1 & 0 & 1 & 0 & 0 & 1 & 0 & 1 & 0 & 1 & 0 & 1 & 1 & 0 & 1 & 0 \\
				0 & 0 & 1 & 1 & 1 & 1 & 0 & 0 & 0 & 0 & 1 & 1 & 1 & 1 & 0 & 0 \\
				1 & 1 & 0 & 0 & 0 & 0 & 1 & 1 & 0 & 0 & 1 & 1 & 1 & 1 & 0 & 0 \\
				0 & 1 & 1 & 0 & 1 & 0 & 0 & 1 & 0 & 1 & 1 & 0 & 1 & 0 & 0 & 1 \\
				1 & 0 & 0 & 1 & 0 & 1 & 1 & 0 & 0 & 1 & 1 & 0 & 1 & 0 & 0 & 1 
		\end{tabular} \\ 
			\hline
		\end{tabular}}
	\end{center}
\end{table}
\end{example}

In Table \ref{SA} we give few specific examples of the parameters of DRCS over small alphabets constructed using Florentine and quasi-Florentine rectangles. The computation of the optimality factor \( \rho \) is based on the bound given in Corollary \ref{LevAFc1}.

\begin{table}[htp]
	\caption{Few Specific Examples of DRCSs Over Small Alphabets.}  
	\label{SA}
	\resizebox{\textwidth}{!}{
	\begin{tabular}{|c|c|c|c|c|c|c|c|c|c|}
		\hline \begin{tabular}{c} 
			Alphabet \\
			size
		\end{tabular} & $K$ &$M$&$N$&$\theta_{\max}$& $Z_x$& $Z_y$& $\rho$ &\begin{tabular}{c} 
			Butson-type \\
			Hadamard matrix
		\end{tabular}   & \begin{tabular}{c} 
			Florentine rectangle or \\
			quasi-Florentine rectangle
		\end{tabular}  \\
		\hline \multirow{2}{*}{$\mathbb{Z}_2$} & $16$ & $16$ &$16$&$16$&$16$&$16$ &1.1857 & $B H(16,2)$  & Florentine rectangle for $K=16$ \\
		\cline { 2 - 10 } & $127$ &$128$ & $127$& $128$& $127$&$127$& 1.0599 &  $B H(128,2)$   & quasi-Florentine rectangle for $K=128$\\
		\hline \multirow{2}{*}{$\mathbb{Z}_3$} & $71$ &$72$ & $71$& $72$& $71$&$71$ &1.0846 & $B H(72,3)$  & quasi-Florentine rectangle for $K=72$ \\
		\cline { 2 - 10 } & $126$&$126$&$126$&$126$&$126$& $126$& 1.0558  &  $B H(126,3)$  & Florentine rectangle for $K=126$\\
		\hline \multirow{2}{*}{$\mathbb{Z}_4$} &$28$&$28$&$28$&$28$&$28$&$28$ &1.1310 & $B H(28,4)$ & Florentine rectangle for $K=28$ \\
		\cline { 2 - 10 } &$139$&$140$&$139$&$140$&$139$&$139$ &1.0569 & $B H(140,4)$ & quasi-Florentine rectangle for $K=140$ \\
		\hline \multirow{2}{*}{$\mathbb{Z}_5$} &$49$&$50$&$49$&$50$&$49$& $49$& 1.1065 &  $B H(50,5)$ & quasi-Florentine rectangle for $K=50$\\
		\cline { 2 - 10 } &$100$&$100$&$100$&$100$&$100$&$100$& 1.0633 &  $B H(100,5)$ & Florentine rectangle for $K=100$ \\
		\hline \multirow{2}{*}{$\mathbb{Z}_6$}&$89$&$90$&$89$&$90$&$89$ &$89$&1.0739 &  $B H(90,6)$ & quasi-Florentine rectangle for $K=90$ \\
		\cline { 2 - 10 } & $42$&$42$&$42$&$42$&$42$&$42$& 1.1031  & $B H(42,6)$ & Florentine rectangle for $K=42$ \\
		\hline \multirow{2}{*}{$\mathbb{Z}_7$} &$97$&$98$&$97$&$98$&$97$&$97$& 1.0702 &  $B H(98,7)$ & quasi-Florentine rectangle for $K=98$ \\
		\cline { 2 - 10 } &$196$ &$196$&$196$&$196$&$196$&$196$& 1.0440 & $B H(196,7)$ &Florentine rectangle for $K=196$ \\
		\hline \multirow{2}{*}{$\mathbb{Z}_8$} &$127$&$128$&$127$&$128$&$127$&$127$ & 1.0599 &  $B H(128,8)$ & quasi-Florentine rectangle for $K=128$ \\
		\cline { 2 - 10 } &$112$&$112$&$112$&$112$&$112$&$112$& 1.0595 & $B H(112,8)$ & Florentine rectangle for $K=112$ \\
		\hline \multirow{2}{*}{$\mathbb{Z}_9$} &$53$&$54$&$53$&$54$&$53$&$53$&1.1014 &  $B H(54,9)$ & quasi-Florentine rectangle for $K=54$ \\
		\cline { 2 - 10 } &$108$&$108$&$108$&$108$&$108$&$108$&1.0607  &  $B H(108,9)$ & Florentine rectangle for $K=108$ \\
		\hline \multirow{2}{*}{$\mathbb{Z}_{10}$} &$71$&$72$&$71$&$72$&$71$&$71$&1.0846  & $B H(72,10)$ & quasi-Florentine rectangle for $K=72$ \\
		\cline { 2 - 10 } &$70$&$70$&$70$&$70$&$70$&$70$&1.0771 &  $B H(70,10)$ & Florentine rectangle for $K=70$ \\
		\hline
	\end{tabular}}
\end{table}

\section{Comparisons with the Existing Aperiodic AF Lower Bound}
In this section, we evaluate the tightness of the proposed aperiodic AF lower bounds. In addition, we provide some numerical examples for comparison.

For sufficiently large \( K \), and under the conditions \( K > \frac{3M}{Z_y} \) and \( N \sqrt{\frac{3M}{KZ_y}} \leq Z_x \leq N \), Corollary \ref{LevAFc1} shows that \( \theta_{\max}^2 \) approaches \( M N \). Under the condition \( Z_x > \pi/\gamma \) and \( 5M \leq K Z_y \leq M N^2 \), Corollary \ref{C5} also shows that \( \theta_{\max}^2 \) tends to \( M N \). In contrast, the aperiodic AF bound for DRCSs in Lemma \ref{lem-shenbound} indicates that \( \theta_{\max}^2 \) approaches \( \frac{M N}{2} \) when \( Z_x = N \). This means that for larger values of \( K \), \( N \), and \( Z_x = N \), the theoretical lower bound given in Corollary \ref{LevAFc1} improves by nearly a factor of \( \sqrt{2} \) compared to the original lower bound \cite{shen2024} for aperiodic DRCSs.

Next, we analyze how close Construction \ref{c1} is to the theoretical bound derived in Corollary \ref{LevAFc1}.

\begin{theorem}\label{Th5}
	The DRCSs proposed in Theorem \ref{thc1} is asymptotically optimal with respect to the
	bound proposed in  (\ref{C1eq}) when $K\geq 4$ and $N\geq 10$.
\end{theorem}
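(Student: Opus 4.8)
The plan is to insert the exact parameters of the DRCS set produced by Construction \ref{c1} into the lower bound of Corollary \ref{LevAFc1}, form the optimality factor $\rho$, and let $N\to\infty$. By Theorem \ref{thc1} the construction yields a $(K,N,N-1,N,\Pi)$-DRCS set with $\Pi=(-N+1,N-1)\times(-N+1,N-1)$; thus, reading off the five parameters in the order $(K,M,\text{length},\theta_{\max},\Pi)$, the number of elementary sequences is $M=N$, the sequence length is $N-1$, the achieved magnitude is $\theta_{\max}=N$, and $Z_x=Z_y=N-1$.

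First I would check that the hypotheses of Corollary \ref{LevAFc1} are met for $K\ge 4$ and $N\ge 10$. The requirement $K>\frac{3M}{Z_y}=\frac{3N}{N-1}=3+\frac{3}{N-1}$ holds because for $N\ge 10$ the right-hand side is at most $3+\frac13<4\le K$. The requirement $(N-1)\sqrt{\frac{3M}{KZ_y}}\le Z_x\le N-1$ collapses, after cancelling the common factor $N-1$, to $\frac{3N}{K(N-1)}\le 1$, i.e.\ the same inequality $K\ge\frac{3N}{N-1}$, while the upper inequality holds with equality. With the hypotheses in force, Corollary \ref{LevAFc1} supplies
\[
\theta_{\text{opti}}^2=M(N-1)\Bigl(1-2\sqrt{\tfrac{M}{3KZ_y}}\Bigr)=N(N-1)\Bigl(1-2\sqrt{\tfrac{N}{3K(N-1)}}\Bigr).
\]

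Next I would form and simplify the optimality factor. Since $\theta_{\max}=N$,
\[
\rho^2=\frac{\theta_{\max}^2}{\theta_{\text{opti}}^2}=\frac{N}{(N-1)\Bigl(1-2\sqrt{\tfrac{N}{3K(N-1)}}\Bigr)}.
\]
Letting $N\to\infty$ with $K$ growing --- which is available because a $K\times(N-1)$ quasi-Florentine rectangle over $\mathbb{Z}_N$ exists with $K$ as large as $N$ when $N=p^n$ (Theorem \ref{T1} and Corollary \ref{cor1}) --- the prefactor $\frac{N}{N-1}\to 1$, while $\sqrt{\frac{N}{3K(N-1)}}=\sqrt{\frac{1}{3K}\cdot\frac{N}{N-1}}\to 0$ as $K\to\infty$. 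Hence $\rho^2\to 1$, so $\rho\to 1$ and the set is asymptotically optimal.

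The genuine obstacle here is conceptual rather than computational: the limit $\rho\to 1$ forces $K\to\infty$, since for any \emph{fixed} $K$ one only obtains $\rho^2\to\bigl(1-2/\sqrt{3K}\bigr)^{-1}>1$. I would therefore state explicitly that the relevant asymptotic regime is the one realized by the construction, in which $K$ is taken on the order of the number of available rows of the quasi-Florentine rectangle (up to $N$); this matches Table \ref{SA}, where $K\approx N$ and the tabulated $\rho$ shrinks toward $1$ as $N$ grows. A minor point to verify along the way is that here $M=N$ exceeds the length $N-1$, so I would confirm that this does not violate any admissibility assumption behind Corollary \ref{LevAFc1}; since that corollary was established for general admissible parameters, no extra argument should be required.
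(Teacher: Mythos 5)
Your proposal is correct and follows essentially the same route as the paper's proof: plug the parameters of Theorem \ref{thc1} ($M=N$, length $N-1$, $\theta_{\max}=N$, $Z_x=Z_y=N-1$) into the bound (\ref{C1eq}) of Corollary \ref{LevAFc1}, form the optimality factor, and take the limit; your expression for $\rho$ coincides with the paper's (\ref{rho1}). The one substantive difference is in your favor: the paper simply asserts $\rho\to1$ as $N\to+\infty$, which, as you point out, is false for any \emph{fixed} $K$ --- the limit is $\bigl(1-2/\sqrt{3K}\bigr)^{-1/2}>1$ --- and becomes true only in the regime you make explicit, where $K$ grows with $N$ (the quasi-Florentine rectangle supplies $K$ of order $N$, exactly the regime of the paper's Table \ref{2}). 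Your hypothesis verification is also the cleaner one: using the sequence length $N-1$ as the prefactor, the admissibility condition of Corollary \ref{LevAFc1} reduces to $K\ge 3N/(N-1)$, which indeed holds for $K\ge4$ and $N\ge10$, whereas the paper's own check uses prefactor $N$ and fails marginally at $K=4$, $N=10$.
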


\begin{proof}
Since \( K \geq 4 \) and \( N \geq 10 \), we have the inequality \( \sqrt{\frac{3MN^2}{KZ_y}} = N\sqrt{\frac{3N}{K(N-1)}} \leq N-1 = Z_x \). According to (\ref{C1eq}) in Corollary \ref{LevAFc1}, the optimality factor is given by
\begin{align}\label{rho1}
\rho = \frac{\theta_{\max}}{\sqrt{MN(1 - 2\sqrt{\frac{M}{3KZ_y}})}} = \frac{N}{\sqrt{N(N-1)\left(1 - 2\sqrt{\frac{N}{3K(N-1)}}\right)}} \rightarrow 1 \quad \text{as} \quad N \rightarrow +\infty.
\end{align}

Thus, the DRCS set \( \mathcal{C} \) asymptotically achieves the theoretical bound presented in Corollary \ref{LevAFc1} when \( K \geq 4 \) and \( N \geq 10 \).
\end{proof}

\begin{remark}
	By the aperiodic lower bound of DRCSs proposed in \cite{shen2024}, as shown in Lemma \ref{lem-shenbound}, the optimality factor
	\begin{align}
		\rho=\frac{\theta_{\max }}{\frac{M N}{\sqrt{Z_y}} \sqrt{\frac{\frac{K Z_x Z_y}{M\left(N+Z_x-1\right)}-1}{K Z_x-1}}}\rightarrow\sqrt{2}, ~N \rightarrow +\infty
	\end{align}
	for DRCS $\mathcal{C}$ in Construction \ref{c1}. From Theorem \ref{Th5}, we have $\rho\rightarrow1,~N \rightarrow +\infty.$ So, this can also show that our proposed theoretical bound is tighter.
\end{remark}
\begin{corollary}
	The DRCSs proposed in Corollary \ref{cFR} is asymptotically optimal with respect to the
     bound proposed in  (\ref{C1eq}) when $K\geq 4$.
\end{corollary}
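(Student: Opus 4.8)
The plan is to specialize the computation in the proof of Theorem \ref{Th5} to the Florentine-rectangle construction of Corollary \ref{cFR}. First I would record the relevant parameters: Corollary \ref{cFR} yields a $(K, N, N, N, \Pi)$-DRCS with $\Pi = (-N, N) \times (-N, N)$, so in the standard DRCS notation the number of elementary sequences is $M = N$, the sequence length is $N$, the achieved sidelobe is $\theta_{\max} = N$, and $Z_x = Z_y = N$. I would then verify the hypotheses of Corollary \ref{LevAFc1} for these parameters, since that is the source of the bound (\ref{C1eq}). The requirement $K > 3M/Z_y$ becomes $K > 3N/N = 3$, which is exactly the assumption $K \geq 4$; the low-ambiguity-zone condition $N\sqrt{3M/(KZ_y)} \leq Z_x$ reduces to $N\sqrt{3/K} \leq N$, i.e.\ $K \geq 3$, and is therefore also met.

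With the bound in force, I would substitute $M = N$ and $Z_y = N$ into (\ref{C1eq}) and use $\theta_{\max} = N$ to obtain
\begin{align*}
	\theta_{\max}^2 \geq MN\left(1 - 2\sqrt{\frac{M}{3KZ_y}}\right) = N^2\left(1 - 2\sqrt{\frac{1}{3K}}\right),
\end{align*}
so that the optimality factor is
\begin{align*}
	\rho = \frac{\theta_{\max}}{\theta_{\text{opti}}} = \frac{N}{\sqrt{N^2\left(1 - 2\sqrt{\frac{1}{3K}}\right)}} = \frac{1}{\sqrt{1 - 2\sqrt{\frac{1}{3K}}}}.
\end{align*}
For $K \geq 4$ one has $2\sqrt{1/(3K)} \leq 1/\sqrt{3} < 1$, so the radicand is positive and $\rho$ is a well-defined real number exceeding $1$. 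Along the family of Florentine squares, whose order forces $K = N$ (as in Example \ref{ex2} and Table \ref{SA}), letting $N \to \infty$ drives $K \to \infty$, whence $2\sqrt{1/(3K)} \to 0$ and $\rho \to 1$, giving the claimed asymptotic optimality.

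The step I expect to be the main obstacle is the correct framing of the limit rather than any algebra. Because $M$, the length, $Z_y$, and $\theta_{\max}$ all equal $N$, they cancel and leave $\rho$ a function of $K$ alone; hence $\rho \to 1$ does not follow from $N \to \infty$ with $K$ held fixed, but genuinely requires $K \to \infty$. This is why the statement must be read along the Florentine-square family $K = N$, with the hypothesis $K \geq 4$ serving only to validate Corollary \ref{LevAFc1} and to keep the lower bound positive.
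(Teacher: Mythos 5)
Your proof is correct and is exactly the argument the paper intends: the paper's own proof of this corollary is omitted with a pointer back to the proof of Theorem \ref{Th5}, and you have faithfully carried out that specialization of Corollary \ref{LevAFc1} to the parameters $M = Z_x = Z_y = \theta_{\max} = N$ of Corollary \ref{cFR}, including the verification that $K \geq 4$ is what makes the hypotheses of Corollary \ref{LevAFc1} hold. Your treatment is in fact more careful than the paper's: because everything cancels, $\rho = \bigl(1 - 2\sqrt{1/(3K)}\bigr)^{-1/2}$ depends on $K$ alone, so (as you correctly insist) $\rho \to 1$ genuinely requires $K \to \infty$ along the Florentine-square family $K = N$, whereas the paper's Theorem \ref{Th5} proof writes $\rho \to 1$ as $N \to +\infty$ with $K$ left symbolic, silently relying on the same fact that $K$ grows with $N$ in the construction.
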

\begin{proof}
	This proof is similar to Theorem 1. It is omitted here.
\end{proof}
\begin{remark}
The DRCS sequence sets proposed in \cite{shen2024} can be regarded as a special case of Corollary \ref{cFR} based on Remark \ref{remark5}. Therefore, the construction proposed by Shen \textit{et al.} \cite{shen2024} is also asymptotically optimal with respect to the bound given in (\ref{C1eq}) when $K\geq 4$.  
\end{remark}
In the following, we provide numerical examples to demonstrate that the proposed aperiodic AF lower bound is tighter than the existing lower bound presented in Lemma \ref{lem-shenbound}.

In Table \ref{2} we list down some parameters of the proposed aperiodic DRCSs. And we compared the optimality factors (\ref{rho1}) and \cite{shen2024} for $(p^n, p^n, p^n-1, p^n, \Pi)$-DRCS, where $3 \leq p < 120$, $n=2$, and $\Pi=(-p^n+1,p^n-1)\times (-p^n+1,p^n-1)$ in Fig. \ref{fig:ex2}.
 As can be seen from Table \ref{2} and Fig. \ref{fig:ex2}, for the construction of Theorem \ref{thc1}, the Shen-Yang-Zhou-Liu-Fan bound of the aperiodic AF for DRCSs asymptotically reaches $\sqrt{2}$ \cite{shen2024}, and the theoretical bound proposed by Corollary \ref{LevAFc1} asymptotically reaches 1. This can also show that our proposed theoretical bound is tighter.

\begin{table}[htp]
\caption{Some parameters of the proposed aperiodic DRCSs}
\label{2}
\begin{center}
\begin{tabular}{|c|c|c|c|c|c|c|c|}
\hline
     $K$& $M$ & $N$ &$Z_x$ & $Z_y$ & $\theta_{\max}$    & $\rho$ in \cite{shen2024} & $\rho$ in (\ref{rho1}) \\
\hline 9 &9 & 8 & 8& 8& 9  &1.4927 & 1.2521 \\
\hline 25 & 25 & 24 & 24 &24 & 25&1.4322 & 1.1243 \\
\hline49 & 49 & 48 & 48 &48& 49 & 1.4224& 1.0853 \\
\hline  81 & 81 & 80 & 80& 80& 81 & 1.4189& 1.0654 \\
\hline  121 & 121 & 120 & 120& 120& 121  & 1.4173& 1.0532\\
\hline 169 & 169 & 168 & 168 &168& 169  &1.4164& 1.0448 \\
\hline  289 & 289 & 288 & 288& 288& 289 &1.4155& 1.0341 \\
\hline 361 & 361 & 360 & 360 &360& 361  &1.4152& 1.0305 \\
\hline  529 &529 & 528 & 528 &528& 529  &1.4149& 1.0252 \\
\hline
\end{tabular}
\end{center}
\end{table}	
 \begin{figure}[htp]
 	\centering
 	\includegraphics[width=0.7\linewidth]{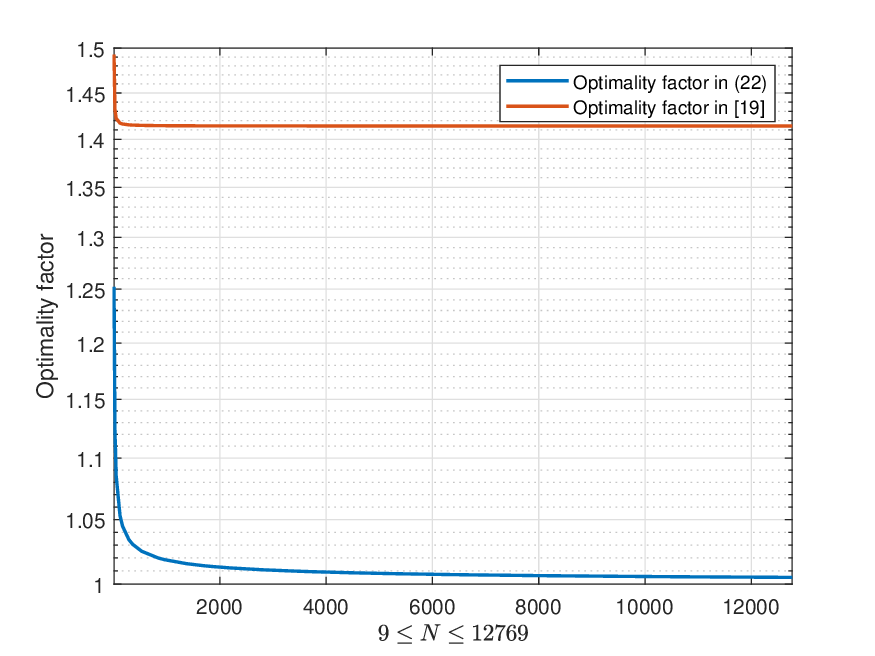}
 	\caption{Comparison between the optimality factors (\ref{rho1}) and
 		\cite{shen2024} with respect to $(p^n, p^n, p^n-1, p^n, \Pi)$-DRCS for $3 \leq  p < 120,$ $n=2$ and $\Pi=(-p^n+1,p^n-1)\times (-p^n+1,p^n-1)$.}
 	\label{fig:ex2}
 \end{figure}
  	                                                                                  
\section{Concluding Remarks}
In this paper, we have introduced a novel aperiodic AF lower bound for unimodular DRCSs, which depends on several factors, including the set size \( K \), the number of channels \( M \), the sequence length \( N \), the maximum delay shift \( Z_x \), the maximum Doppler shift \( Z_y \), and the weight vector \( \mathbf{w} \). We have analyzed three types of weight vectors, demonstrating that many established aperiodic bounds, such as the Welch bound, Sarwate bound, Levenshtein bound, Peng-Fan bound, Liu-Guan-Mow bound, and Meng-Guan-Ge-Liu-Fan bound, can be regarded as special cases of our proposed aperiodic AF lower bound.
Additionally, we have constructed a class of DRCSs with small alphabets based on quasi-Florentine rectangles and Butson-type Hadamard matrices, which are asymptotically optimal with respect to the derived aperiodic DRCS set bounds. 
The optimality factor of the proposed DRCS sets in relation to the Shen-Yang-Zhou-Liu-Fan bound converges to $\sqrt{2}$. In contrast, the optimality factor with respect to the proposed bound converges to 1. This indicates that the proposed bound is tighter than the Shen-Yang-Zhou-Liu-Fan bound.

\bibliographystyle{ieeetr}
\bibliography{ref_lb.bib}

\end{document}